\documentclass[11pt]{article}

\usepackage{amsthm}
\usepackage{graphicx} 
\usepackage{array} 

\usepackage{amsmath, amssymb, amsfonts, verbatim}
\usepackage{hyphenat,epsfig,subcaption,multirow}

\usepackage[usenames,dvipsnames]{xcolor}
\usepackage[ruled]{algorithm2e}

\usepackage{tcolorbox}
\tcbuselibrary{skins,breakable}
\tcbset{enhanced jigsaw}

\usepackage[compact]{titlesec}

\definecolor{DarkRed}{rgb}{0.5,0.1,0.1}
\definecolor{DarkBlue}{rgb}{0.1,0.1,0.5}

\usepackage{nameref}
\definecolor{ForestGreen}{rgb}{0.1333,0.5451,0.1333}
\definecolor{Red}{rgb}{0.9,0,0}
\usepackage[linktocpage=true,
	pagebackref=true,colorlinks,
	linkcolor=DarkRed,citecolor=ForestGreen,
	bookmarks,bookmarksopen,bookmarksnumbered]
	{hyperref}

\usepackage{bm}
\usepackage{url}
\usepackage{xspace}
\usepackage[mathscr]{euscript}

\usepackage{tikz}
\usetikzlibrary{arrows}
\usetikzlibrary{arrows.meta}
\usetikzlibrary{shapes}
\usetikzlibrary{backgrounds}
\usetikzlibrary{positioning}
\usetikzlibrary{decorations.markings}
\usetikzlibrary{patterns}
\usetikzlibrary{calc}
\usetikzlibrary{fit}
\usetikzlibrary{snakes}

\usepackage{mdframed}

\usepackage[noend]{algpseudocode}
\makeatletter
\def\BState{\State\hskip-\ALG@thistlm}
\makeatother

\usepackage{cite}
\usepackage{enumitem}

\usepackage[margin=0.8in]{geometry}

\newtheorem{theorem}{Theorem}
\newtheorem{lemma}{Lemma}[section]

\newtheorem{claim}[lemma]{Claim}

\newtheorem{definition}{Definition}
\newtheorem*{Definition}{Definition}
\newtheorem{problem}{Problem}
\newtheorem{remark}[lemma]{Remark}

\newtheorem*{claim*}{Claim}
\newtheorem*{proposition*}{Proposition}
\newtheorem*{lemma*}{Lemma}
\newtheorem*{problem*}{Problem}

\newtheorem{mdresult}{Result}

\newtheorem{mdinvariant}{Invariant}

\newcommand{\ssection}[1]{\paragraph{#1}}

\allowdisplaybreaks

\renewcommand{\qed}{\nobreak \ifvmode \relax \else
      \ifdim\lastskip<1.5em \hskip-\lastskip
      \hskip1.5em plus0em minus0.5em \fi \nobreak
      \vrule height0.75em width0.5em depth0.25em\fi}

\newcommand*\samethanks[1][\value{footnote}]{\footnotemark[#1]}

\newcommand{\toShrink}{-.20cm}
\newcommand{\toShrinkEnu}{-.2cm}

\newenvironment{tbox}{\begin{tcolorbox}[
		enlarge top by=5pt,
		enlarge bottom by=5pt,
		 boxsep=0pt,
                  left=4pt,
                  right=4pt,
                  top=10pt,
                  arc=0pt,
                  boxrule=1pt,toprule=1pt,
                  colback=white
                  ]
	}
{\end{tcolorbox}}



\newcommand{\Geq}[1]{\ensuremath{\underset{\textnormal{#1}}\geq}}



\newcommand{\Ot}{\ensuremath{\widetilde{O}}}
\newcommand{\eps}{\ensuremath{\varepsilon}}
\newcommand{\Paren}[1]{\Big(#1\Big)}
\newcommand{\Bracket}[1]{\Big[#1\Big]}
\newcommand{\bracket}[1]{\left[#1\right]}
\newcommand{\paren}[1]{\ensuremath{\left(#1\right)}\xspace}
\newcommand{\card}[1]{\left\vert{#1}\right\vert}

\newcommand{\set}[1]{\ensuremath{\left\{ #1 \right\}}}

\newcommand{\polylog}{\mbox{\rm  polylog}}

\newcommand{\opt}{\textnormal{\ensuremath{\mbox{opt}}}\xspace}

\newcommand{\alg}{\ensuremath{\mathcal{A}}\xspace}

\DeclareMathOperator*{\Exp}{\ensuremath{{\mathbb{E}}}}
\DeclareMathOperator*{\Prob}{\ensuremath{\textnormal{Pr}}}
\renewcommand{\Pr}{\Prob}

\newcommand{\Ex}{\Exp}

\newcommand{\etal}{et al.\xspace}


\newcommand{\Ei}[1]{\ensuremath{E^{(#1)}}}

\newcommand{\Gi}[1]{\ensuremath{G^{(#1)}}}

\renewcommand{\alg}{\ensuremath{\textnormal{\textsf{ALG}}}\xspace}

\newcommand{\Greedy}{\ensuremath{\textnormal{\textsf{Greedy}}}\xspace}
\newcommand{\GreedyCoreset}{\ensuremath{\textnormal{\textsf{GreedyCoreset}}}\xspace}

\newcommand{\Mstar}{\ensuremath{M^{*}}}

\newcommand{\ww}[1]{\ensuremath{w(#1)}}

\title{Distributed Weighted Matching via Randomized Composable Coresets\footnote{An extended abstract appears in \emph{Proceedings of the 36th International Conference on Machine Learning (ICML) 2019}.}}
\author{Sepehr Assadi\thanks{Department of Computer Science, Princeton University; supported in part by the Simons Collaboration on Algorithms and Geometry. 
Majority of the work done while the author was a summer intern at Google Research, New York. Emails: {{\small {\tt sassadi@princeton.edu.}}}}
\and MohammadHossein Bateni\thanks{Google Research, New York, NY, US. Email: {{\small {\tt \{bateni,mirrokni\}@google.com.}}}}
\and Vahab Mirrokni\samethanks
}

\date{}

\begin{document}
\maketitle


\begin{abstract}
  Maximum weight matching is one of the most fundamental combinatorial optimization problems with a wide range of applications in data mining and bioinformatics.  Developing distributed weighted matching algorithms is challenging due to the sequential
  nature of efficient algorithms for this problem.  In this paper, we develop a simple distributed algorithm for the problem on general graphs with approximation guarantee of $2+\epsilon$ that (nearly) {\em matches} that of the sequential {\em greedy} algorithm. A key
  advantage of this algorithm is that it can be easily implemented in only {\em two rounds} of computation in modern parallel computation frameworks such as MapReduce. We also demonstrate the efficiency of our algorithm in practice on various graphs
  (some with half a trillion edges) by achieving objective values always close to what is achievable in the centralized setting.  
\end{abstract}

\section{Introduction}\label{sec:intro}

A matching in a graph is defined as a collection of edges that do not share any vertices. The problem of finding a matching with a maximum weight in an edge-weighted graph---henceforth referred to
as the \emph{maximum weight matching} (MWM) problem---is one of the most fundamental combinatorial optimization problems with a wide range of applications in data mining and bioinformatics.
For instance, maximum weight matchings can improve the quality of data clustering~\cite{BBDHKLM17} or partitioning~\cite{KK98}, as well as discovery of subgraphs in networks in bioinformatics~\cite{LD04,BSX08}. 
Other  applications are in trading markets and computational advertising~\cite{PennT00,MSVV07,CharlesCDJS10}, kidney exchange~\cite{DickersonPS12,BlumDHPSS15}, online labor markets~\cite{BehnezhadR18}, 
and semi-supervised learning~\cite{JebaraWC09} (see~\cite{ManshadiAGKMS13} for other similar examples). 
Yet another application arise in numerical linear algebra, e.g., in sparse linear solvers~\cite{DK99,DK01}, decomposition of sparse matrices~\cite{PF90}, and computing sparse bases for underdetermined matrices~\cite{PCP06}.

The study of MWM
dates back to the introduction of the complexity class $\mathbf{P}$ as the set of ``tractable'' problems by Edmonds~\cite{Edmonds65a,Edmonds65} who designed a poly-time algorithm for this problem
on general graphs. Since then, there have been numerous attempts in developing faster algorithms for MWM (see, e.g.,~\cite{Gabow76,GabowGS84,Gabow85,Gabow90,GabowT91,CyganGS12,DuanPS17}) 
culminating in the $\Ot(m\sqrt{n})$-time\footnote{Throughout the paper, we use $\Ot(f):= O(f) \cdot \polylog{(f)}$.}
algorithm of Gabow and Tarjan~\cite{GabowT91}. For approximation algorithms, the greedy algorithm that repeatedly picks the heaviest edge possible in the matching achieves a 
two approximation and after a series of work~\cite{Preis99,VinkemeierH05,PettieS04,DuanP10}, an $\Ot(m/\eps)$-time algorithm for $(1+\eps)$-approximation was developed by Duan and Pettie~\cite{DuanP14}. 

Nowadays, many applications that involve MWM require processing massive graphs that are typically being stored and processed in a distributed fashion. Classical algorithmic approaches to MWM are no longer viable options to cope 
with challenges that stem from processing massive graphs and one should now instead focus on algorithms that can be implemented \emph{efficiently in distributed
settings} even at the cost of a (slight) reduction in the quality of the solution. 

In this paper, we design a simple and efficient greedy distributed algorithm for the maximum weight matching problem with an approximation ratio that nearly matches that of the {sequential greedy algorithm} for MWM. 
Our algorithm can also be easily implemented in two rounds of parallel computation in MapReduce-style computation frameworks, which is known to be the minimum number of rounds necessary for solving this problem. 



\subsection{Background and Related Work}\label{sec:background}

Maximum weight and maximum cardinality matchings have been studied extensively in different models of computation for processing massive graphs such as streaming and distributed settings; see, 
e.g.,~\cite{McGregor05,EpsteinLMS11,GoelKK12,KonradMM12,AhnGM12,AhnG13,Kapralov13,ManshadiAGKMS13,CrouchS14,AssadiKLY16,AssadiKL17,PazS17,AhnG15,LattanziMSV11,HuangRVZ15,AssadiK17,CzumajLMMOS18,HarveyLL18,AssadiBBMS17} and
 references therein. 

\begin{table*}[t!]
\caption{A summary of previous work on MapReduce algorithms for MWM and our result.  \label{tab:comp}}

\centering
{\def\arraystretch{1.5}\tabcolsep=10pt
\begin{tabular}{cccc}
\hline
{\bf Reference} & {\bf Approximation} & {\bf Memory Per-Machine} & {\bf Rounds} \\
\hline
\cite{LattanziMSV11} & $8$ & $\Ot(n)$ & $O(\log{n})$ \\
\cite{CrouchS14} & $4$ & $\Ot(n)$ & $O(\log{n})$ \\
\cite{AhnG15} & $1+\eps$ & $\Ot(n)$ & $O(\eps^{-1}\log{n})$ \\
\cite{HarveyLL18} & $2$ & $\Ot(n)$ & $O(\log{n})$ \\
\cite{CzumajLMMOS18} & $2+\eps$ & $\Ot(n)$ & $O(\eps^{-\Theta(1/\eps)} \cdot \paren{\log\log{n}}^2)$ \\
\cite{AssadiBBMS17} & $2+\eps$ & $\Ot(n)$ & $O(\eps^{-\Theta(1/\eps)} \cdot \log\log{n})$ \\
\cite{GamlathKMS18} & $1+\eps$ & $\Ot(n)$ & $O(\eps^{-\Theta(1/\eps^2)} \cdot \log\log{n})$\\
\hline
\cite{LattanziMSV11} & $8$ & $n^{1+\Omega(1)}$ & $O(1)$ \\
\cite{CrouchS14} & $4$ & $n^{1+\Omega(1)}$ & $O(1)$ \\
\cite{AhnG15} & $1+\eps$ & $n^{1+\Omega(1)}$ & $O(1/\eps)$ \\
\cite{HarveyLL18} & $2$ & $n^{1+\Omega(1)}$ & $O(1)$ \\
\cite{AssadiK17} & $O(1)$ & $\Ot(n\sqrt{n})$ & $2$  \\
\cite{AssadiBBMS17} & $3+\eps$ & $\Ot(n\sqrt{n})$ & $2$  \\
\hline
\textbf{This paper} & $2+\eps$ & $O(n\sqrt{n})$ & $2$  \\
\end{tabular}
}
\end{table*}

Closely related to our work, Lattanzi~\etal~\cite{LattanziMSV11} designed MapReduce algorithms with $2$- and $8$-approximation guarantees, respectively, for unweighted and weighted matchings in $O(1)$ rounds 
on machines with memory $n^{1+\Omega(1)}$. These results were subsequently improved to $(1+\eps)$-approximation for both problems in $O(1/\eps)$ rounds by Ahn and Guha~\cite{AhnG15} using sophisticated primal-dual algorithms 
and multiplicative-weight-update method (for \emph{unweighted bipartite} matching, 
a simpler algorithm with $(1+\eps)$-approximation in $O(1/\eps)$ rounds using $O(n\sqrt{n})$ space was recently proposed in~\cite{BehnezhadDETY17}). 
Very recently, Harvey~\etal~\cite{HarveyLL18} 
designed a $2$-approximation algorithm for weighted matchings in $O(1)$ rounds based on the local-ratio theorem of~\cite{PazS17} for MWM.
Furthermore, Assadi and Khanna~\cite{AssadiK17} designed
a MapReduce algorithm with $O(n\sqrt{n})$ memory---using the so-called randomized composable coreset method which we also exploit in this paper---that achieves an $O(1)$-approximation to both problems in only two rounds of computation which is the 
optimal number of rounds by a result of~\cite{AssadiKLY16}. This result was very recently improved by Assadi~\etal~\cite{AssadiBBMS17} to (almost) $1.5$-approximation for unweighted matchings. 
Recent papers by~\cite{CzumajLMMOS18,AssadiBBMS17,GhaffariGKMR18} also considered these problems with
smaller per-machine memory and achieved $(1+\eps)$- and $(2+\eps)$-approximation for unweighted and weighted matchings in $O(\log\log{n})$ rounds and $\Ot(n)$ memory per-machine. 
The approximation ratio for weighted matchings in these results were very recently improved to by $(1+\eps)$~\cite{GamlathKMS18}.

Our work is also closely aligned with the trend on ``parallelizing'' sequential greedy algorithms in distributed settings (e.g.,~\cite{KumarMVV13,MirrokniZ15,BarbosaENW15,BarbosaENW16,HarveyLL18}). 
As noted elegantly by Kumar~\etal~\cite{KumarMVV13}: ``Greedy 
algorithms are practitioners' best friends---they are intuitive, simple to implement, and often lead to very good solutions. However, implementing greedy algorithms in a distributed setting is challenging since the greedy choice is inherently sequential, and it is not 
clear how to take advantage of the extra processing power.''  As such there have been extensive efforts in recent years to carry over the greedy algorithms in the sequential setting to distributed models as well.
These results are typically of two types: they either use a relatively large number of rounds to ``faithfully'' simulate the greedy algorithm, i.e., to obtain approximation guarantees that (almost) match that of the greedy 
algorithm~\cite{KumarMVV13,BarbosaENW16}, or use a very small number of rounds, say one or two,  for ``weak'' simulation, resulting in approximation guarantees that are within some constant factor of the corresponding greedy 
algorithm~\cite{MirrokniZ15,BarbosaENW15}. Table~\ref{tab:comp} provides a succinct summary of previous work.

\subsection{Our Contribution}\label{sec:results}

In this paper, we ``faithfully'' parallelize the sequential greedy algorithm for MWM in two rounds of parallel computation. In particular, we present an algorithm in 
the MapReduce framework (defined formally in Section~\ref{sec:prelim}) that for any constant $\eps > 0$, outputs a $(2+\eps)$-approximation to maximum weight matching in expectation using $O(\sqrt{\frac{m}{n}})$ 
machines	each with $O(\sqrt{mn})$ memory and in only two rounds of computation; here, $m$ and $n$ denote the number of edges and vertices in the graph, respectively. See Theorem~\ref{thm:main} for the formal statement of this result. 

Our distributed algorithm works as follows: send each edge of the graph to $O(1)$ machines \emph{randomly}, run the greedy algorithm---the one that repeatedly picks the heaviest available edge in the matching---on each part separately, combine the output
of the greedy algorithms on a single machine, and find a near-optimal weighted matching among these edges using any standard offline algorithm, say algorithm of~\cite{DuanP14} (see also
Section~\ref{sec:wm-mapreduce} for details on when one can simply run the greedy algorithm at the end). We prove that this simple algorithm leads to an almost two approximate matching of the original graph. 
This technique of partitioning the input randomly and computing a subgraph of each piece (here, a matching output by the greedy algorithm) 
is called the \emph{randomized composable coreset} technique and has been used previously in context of unweighted matchings~\cite{AssadiK17,AssadiBBMS17} and constrained submodular
maximization~\cite{MirrokniZ15,BarbosaENW15} (see Section~\ref{sec:coreset}). 
Finally, the number of rounds of our algorithm is \emph{optimal} by a result of~\cite{AssadiKLY16}. 

\ssection{Comparison with prior work.} We conclude this section by making the following two comparisons: 
	
\vspace{-5pt}
\begin{itemize}[leftmargin=10pt]
	 
	\item Number of rounds used by our algorithm is an \textbf{absolute constant two}, independent of the approximation of the algorithm. This significantly improves upon the previously best
	 MapReduce algorithms of~\cite{AhnG15,HarveyLL18} that require a 
	\textbf{large unspecified constant} number of rounds to achieve a similar guarantee on the approximation ratio. Other 
	algorithms for weighted matching 
	with similar guarantee on the number of rounds as ours are that of~\cite{LattanziMSV11} that achieves $8$-approximation (improvable to (almost) $4$-approximation using the Crouch-Stubbs technique~\cite{CrouchS14}) in \emph{six} rounds
	when using the same per-machine memory as ours (and \emph{at least three} rounds by allowing even more memory), and (almost) $3$-approximation of~\cite{AssadiBBMS17} which is based on a considerably complicated algorithm
	(as it first finds an approximation to \emph{unweighted} matching with better than $2$-approximation which is well-known to be a ``hard task'' for matchings\footnote{For instance, getting efficient algorithms with better than $2$-approximation in both streaming and dynamic graphs models are longstanding open problems, and very recently proven to be impossible for online edge-arrival graphs~\cite{GamlathKMSW19}.}). 
	We emphasize that the main bottleneck in MapReduce computation is the transition between different rounds (see, e.g.,~\cite{LattanziMSV11}) and hence 
	minimizing the number of rounds is the primary goal in this setting. 

	\item Previous work on parallelizing greedy algorithms in MapReduce framework suffered from one of the following two drawbacks: either a \textbf{suboptimal approximation guarantee} compared to the greedy
	algorithm {even by allowing unbounded computation time on each machine}~\cite{BarbosaENW15,MirrokniZ15}, or a \textbf{relatively large number of rounds} to match the performance of the greedy algorithm
	exactly or to within a factor of $(1+\eps)$~\cite{BarbosaENW16,KumarMVV13,HarveyLL18}. Obtaining MapReduce algorithms that can (almost) match the performance of the greedy algorithm without
	blowing up the number of rounds (in the context of submodular maximization) has been posed as an open question very recently~\cite{LiuV19}. 
	To our knowledge, ours is the \emph{first} parallel implementation of the greedy algorithm in a minimal number of rounds with (almost) no blow-up in the approximation ratio. 
	It is a fascinating open question if our improvement for MWM can be extended to other greedy algorithms, in particular, for constrained submodular maximization. 
\end{itemize}

\vspace{-7pt}

In addition to aforementioned theoretical improvements over previous works, our algorithm has the benefit of being extremely simple (with nearly all the details ``pushed'' to the analysis), making it easily implementable
in the MapReduce model (the only other MapReduce algorithm for matching that we know of to be implemented previously is~\cite{BehnezhadDETY17} which is \emph{limited to bipartite graphs in a crucial way}). We believe this additional feature of our algorithm 
is an important contribution of this paper. We discuss this further in Section~\ref{sec:exp} where we present our experimental results and give an empirical comparison of our algorithm with prior algorithms.


\section{Preliminaries}\label{sec:prelim}
 Throughout, $[t] := \set{1,\ldots,t}$. For a graph $G(V,E)$, $\opt(G)$ denotes the weight of a maximum weight matching in $G$. 

\paragraph{Greedy algorithm.} Let $G(V,E)$ be a graph and $\pi :=\pi(E)$ denote any permutation of $E$. $\Greedy(G,\pi)$ denotes the standard greedy algorithm
which iterates over edges according to $\pi$ and add $e =(u,v)$ to the matching iff both $u$ and $v$ are unmatched. 
It is a standard fact that when $\pi$ is sorted in non-increasing order of weights, $\Greedy(G,\pi)$ outputs a $2$-approximation to $\opt(G)$.

\ssection{MapReduce framework.}
We adopt the MapReduce model as formalized by Karloff~\etal~\cite{KarloffSV10}; see also~\cite{GoodrichSZ11,BeameKS13}. Let $G(V,E)$ with $n:= \card{V}$ and $m := \card{E}$ be the input graph. 
In this model, there are $p$ machines, each with a memory of $s$ such that $p \cdot s = O(m)$, i.e., at most a constant factor larger than the input size, and both $p,s = m^{1-\Omega(1)}$, i.e., 
sublinear in the input size.  The motivation behind these constraints is that the number of machines, and local memory of each machine should be much smaller than the input size to the problem since these frameworks are used to process massive datasets. 
Computation in this model proceeds in synchronous rounds: in each round, each machine performs some local computation
and at the end of the round machines exchange messages to guide the computation for the next round. All messages received by each machine in one round have to fit into the memory of the machine.


\subsection{Randomized Composable Coresets}\label{sec:coreset} 

We briefly review the notion of randomized composable coresets originally introduced by~\cite{MirrokniZ15} in the context of submodular maximization, and further refined in~\cite{AssadiK17} for 
graph problems. Our definition slightly deviates from previous works as we will remark below.

Let $E$ be an edge-set of a weighted graph $G(V,E)$. 
Let $c \geq 1$ be a parameter. A collection of edges $\set{\Ei{1},\ldots,\Ei{k}}$ is a \emph{random $k$-clustering} of $E$ with \emph{expected multiplicity} $c$ iff each edge $e$ in $E$ 
is sent to $c_{e}$ different sets $\Ei{i_1},\ldots,\Ei{i_{c_e}}$ chosen uniformly at random, where $c_e$ is chosen independently for each edge from the \emph{binomial distribution} with $k$ trials and expected value $c$. 
A random clustering of $E$ naturally defines clustering the graph $G$ into $k$ subgraphs $\Gi{1},\ldots,\Gi{k}$ where $\Gi{i} := G(V,\Ei{i})$ for all $i \in [k]$; as a result, 
we use random clustering for both the edge-set and the input graph interchangeably.  
\begin{Definition}[cf.~\cite{MirrokniZ15,AssadiK17}]\label{def:randomized-coreset}
	Consider an algorithm $\alg$ that given a graph $G(V,E)$ outputs a subgraph $\alg(G) \subseteq G$ with at most $s$ edges. 
	Let $k,c \geq 1$ be integers and $\Gi{1},\ldots,\Gi{k}$ denote a random $k$-clustering with expected multiplicity $c$ of a graph $G$. 
	We say that $\alg$ outputs an $\alpha$-approximate $(k,c)$-randomized composable coreset of size $s$ for the weighted matching problem iff
	\begin{align*}
		\alpha \cdot \Ex\bracket{\opt\paren{\alg(\Gi{1}) \cup \ldots \cup \alg(\Gi{k})}} \geq \opt(G),
	\end{align*}
	where $\opt(\cdot)$ denotes the weight of a maximum weight matching in the given graph. Here, the expectation is taken over the random choice of the random clustering.
\end{Definition}

We remark that our definition is somewhat different from~\cite{MirrokniZ15,AssadiK17} in the following sense: Previous works considered the case where each edge
is sent to \emph{exactly} $c$ different subgraphs (only $c=1$ was considered in~\cite{AssadiK17}), while we send each edge to $c$ subgraphs in \emph{expectation}. 
This way of partitioning has the simple yet helpful property that makes the distribution of graphs $\Gi{1},\ldots,\Gi{k}$ a \emph{product} distribution, i.e., each graph $\Gi{i}$ is chosen \emph{independently} even conditioned on all other
graphs in the random $k$-clustering. At the same time, size of each subgraph and the total number of edges across all subgraphs, are still respectively $O(m \cdot c/k)$ and $c\cdot m \pm \Theta(\sqrt{c \cdot m \log{n}})$ with high probability. 

\ssection{Randomized Coresets in MapReduce Framework.} Suppose $G(V,E)$ is the input and let $k:= \sqrt{m \cdot c/s}$. We use a randomized coreset to obtain a MapReduce algorithm: 
\begin{tbox}
\vspace{-5pt}
\begin{enumerate}[leftmargin=15pt]
	\item \textbf{Random clustering:} Create a random $k$-clustering $\Gi{1},\ldots,\Gi{k}$ of expected multiplicity $c$ and allocate each graph $\Gi{i}$ to the machine $i \in [k]$.  
	
	\item \textbf{Coreset:} Each machine $i \in [k]$ creates a randomized composable coreset $C_i \leftarrow \alg(\Gi{i})$. 
	
	\item \textbf{Post-processing:} Collect the union of coresets to create $H:= H(V,C_1,\ldots,C_k)$ on one machine and return a $\beta$-approximation to MWM  on $H$ using \emph{any} offline algorithm. 
\end{enumerate}
\end{tbox}

It is easy to verify that the algorithm requires $O(k) = O(\sqrt{m \cdot c/s})$ machines with $O(\sqrt{m \cdot c \cdot s} + n)$ memory and only \emph{two} rounds of computation. Moreover, by Definition~\ref{def:randomized-coreset}, the 
output of this algorithm is an $(\alpha \cdot \beta)$-approximation to maximum weight matching of $G$. 



\section{A Randomized Coreset for Maximum Weight Matching}\label{sec:weighted-matching}

We present a simple randomized composable coreset for MWM in this section and then use it to design an efficient MapReduce algorithm for this problem. In Appendix~\ref{app:coreset-lower}, 
we prove the optimality of the size of our corset using an adaptation of the argument in~\cite{AssadiK17}.

\begin{theorem}\label{thm:weighted-matching-0.5}
	For $\eps > 0$, there exists a $\paren{2+\eps}$-randomized composable coreset of size $O(n)$ with expected multiplicity $O\paren{\frac{\log{(1/\eps)}}{\eps}}$ for the maximum weight matching problem. 
\end{theorem}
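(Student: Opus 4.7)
My plan is to take the coreset algorithm $\alg$ to be $\Greedy$ itself, run on the non-increasing weight ordering of the edges: its output on any subgraph is a matching, hence has size at most $n/2 = O(n)$, so the size bound comes for free. Everything then reduces to the approximation claim, namely $\Ex\bracket{\opt(H)} \geq \opt(G)/(2+\eps)$ where $H := \alg(\Gi{1}) \cup \cdots \cup \alg(\Gi{k})$; the post-processing step of the coreset framework then extracts such a matching from $H$ using any offline MWM algorithm.

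The one structural property of $\Greedy$ I would rely on throughout is a \emph{threshold monotonicity} fact: for every $\tau \geq 0$, the edges of weight $\geq \tau$ in $\alg(\Gi{i})$ coincide exactly with the greedy matching of the subgraph of $\Gi{i}$ induced by edges of weight at least $\tau$. This is immediate because $\Greedy$ finalizes its decisions on heavier edges before it ever sees lighter ones. Consequently, at every threshold $\tau$, the part of $H$ of weight $\geq \tau$ is a union of \emph{maximal} matchings on the random subgraphs of $\Gi{i}$ restricted to edges of weight $\geq \tau$, so much of the weighted analysis reduces to an unweighted-style argument performed at each weight scale.

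Using this, I would fix a maximum weight matching $\Mstar$ of $G$ and partition its edges by geometric weight class $\Mstar_j := \set{e \in \Mstar : \ww{e} \in [(1+\eps)^j, (1+\eps)^{j+1})}$. The heart of the argument is an edge-by-edge substitution claim: for every $e=(u,v) \in \Mstar$, \emph{conditioned} on $e$ being sampled into some $\Gi{i}$, either $e \in \alg(\Gi{i}) \subseteq H$, or $\alg(\Gi{i})$ already contains an edge incident to $u$ or $v$ of weight at least $\ww{e}$---the standard ``blocker'' from the classical $2$-approximation analysis of $\Greedy$. With expected multiplicity $c = \Theta(\log(1/\eps)/\eps)$, the probability that a fixed $e$ is sampled by no machine is at most $(1-c/k)^k \leq e^{-c}$, which contributes only an $O(\eps) \cdot \opt(G)$ additive loss in total.

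The main obstacle will be assembling these per-edge witnesses into a single \emph{integral} matching in $H$: a witness edge $f \in H$ can serve as a blocker for up to two $\Mstar$ edges (one per endpoint), and witnesses produced on different machines can conflict on shared vertices. I would handle this by a charging argument performed class-by-class, from heaviest to lightest. At class $j$, threshold monotonicity lets me treat the portion of $H$ above $(1+\eps)^j$ as a union of maximal matchings on the random high-weight subgraphs; the unweighted matching coreset analyses of~\cite{AssadiK17,AssadiBBMS17} then guarantee that this portion of $H$ already contains an integral matching covering, up to a $(1+\eps)$ factor, half of the total weight of $\Mstar$ edges at levels $\geq j$. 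Accumulating the incremental contributions across the $O(\log n)$ nontrivial weight classes, and absorbing the $e^{-c}$ sampling-miss loss by a union bound, yields a matching of $H$ of weight at least $\ww{\Mstar}/(2+\eps)$; the extra $1/\eps$ factor in $c$ is exactly what is needed to drive the per-class failure probability below $\eps/\log n$ and keep the charging tight.
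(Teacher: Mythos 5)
Your algorithm is exactly the paper's ($\Greedy$ with a consistent weight ordering on each random piece), and your two basic observations---threshold monotonicity of $\Greedy$ and the per-edge dichotomy ``either $e$ enters the local greedy matching or a heavier incident blocker does''---are correct and are also where the paper starts. The gap is in the step you yourself flag as the main obstacle: assembling the per-edge witnesses into a single integral matching of weight at least $\ww{\Mstar}/(2+\eps)$. Your class-by-class charging produces, for each threshold $\tau$, \emph{some} matching $N_\tau$ inside the portion of $H$ of weight at least $\tau$ with $\card{N_\tau}\gtrsim\frac{1}{2}\card{\Mstar_{\geq\tau}}$; but since $\ww{N}=\int_0^\infty\card{N_{\geq\tau}}\,d\tau$ for any matching $N$, what you actually need is one matching satisfying the cardinality guarantee at \emph{all} thresholds simultaneously. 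Per-level matchings do not splice together for free: the standard way to merge them (the Crouch--Stubbs reduction, which this paper itself discusses as a baseline) loses an extra factor of $2$, so this route lands at roughly $(4+\eps)$, not $(2+\eps)$. A second, related problem is that your blocker for $e$ lives on whichever machine happened to sample $e$, so witnesses for different edges of $\Mstar$ come from different local matchings $M_i$ and can collide at shared vertices; nothing in the proposal makes the witness set conflict-free. (Also, the cited unweighted coreset results concern unions of \emph{maximum} matchings and give constant ratios weaker than $2$, so even the per-level claim is not off the shelf.)

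The paper avoids both issues by localizing everything to one reference machine rather than layering by weight. It classifies each $e\in\Mstar$ as free or blocked \emph{on machine $1$} (relative to the prefix $\pi^{<e}$ of the global ordering, independently of whether $e\in\Gi{1}$). Blocked edges get certificates inside the single matching $M_1$, so those witnesses are automatically conflict-free; free edges are shown to appear in $H$ with probability $1-O(\log c/c)$ because freeness is an i.i.d.\ event across machines under the product random clustering, so a free edge is, with high probability, also free on some other machine that contains it and hence lies in that machine's greedy matching. The remaining conflicts---a free edge of $\Mstar$ sharing a vertex with a certificate in $M_1$ of no larger weight---are handled by a local case analysis (the type $1$/$2$/$3$ partition and Claim~\ref{clm:types}) that builds one matching of weight at least $\frac{1}{2}(\ww{F'_1}+\ww{B_1})$. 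This is the piece your proposal is missing; without it, or an equivalent conflict-resolution argument, the claimed $(2+\eps)$ bound does not follow.
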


Our coreset in Theorem~\ref{thm:weighted-matching-0.5} is simply the greedy algorithm for MWM (with consistent tie-breaking). Let $G(V, E)$ be a graph and $\Gi{1},\ldots,\Gi{k}$ be a random $k$-clustering of $G$ with expected multiplicity $c$. 
We propose the
$\GreedyCoreset$ for approximating MWM on $G$: on each subgraph $\Gi{i}$, simply return $M_i:= \Greedy(\Gi{i},\pi_i)$ as the coreset, where $\pi_i$ sorts 
the edges in $\Gi{i}$ in non-increasing order of their weights (breaking the ties \emph{consistently} across all $i \in [k]$). In the following lemma, we analyze the performance of this coreset.

\begin{lemma}\label{lem:weighted-matching}
	Suppose $\Gi{1},\ldots,\Gi{k}$ is a random $k$-clustering of $G$ with expected multiplicity $c$ and $M_i := \Greedy(\Gi{i},\pi_i)$.
	Define the graph $H(V,E(H))$ with $E(H):= \bigcup_{i=1}^{k} M_i$; then, 
	\[\Ex\bracket{\opt(H)} \geq \paren{\frac{1}{2}-O\paren{\frac{\log{c}}{c}}} \cdot \opt(G).\] 
\end{lemma}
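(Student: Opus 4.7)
The plan is to compare the matching $\opt(H)$ to a maximum weight matching $M^\star$ of $G$, exploiting a deterministic property of the greedy algorithm together with the independence structure of the random clustering.

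First I would record a deterministic fact that I will use repeatedly: for any subgraph $G' \subseteq G$ and any edge $e = (u,v) \in G'$, the matching $\Greedy(G', \pi')$ contains either $e$ itself or an edge incident to $\{u,v\}$ of weight at least $w(e)$. Indeed, greedy processes edges in order of decreasing weight and only skips $e$ when one of $u,v$ has already been matched to some heavier edge. Applied to $G^{(i)}$, this says: whenever $e \in G^{(i)}$, $M_i$ contains an edge $\tilde h_e^{(i)}$ with $\tilde h_e^{(i)} \cap \{u,v\} \neq \emptyset$ and $w(\tilde h_e^{(i)}) \geq w(e)$; I will call this edge a \emph{representative} of $e$.

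Next, fixing a maximum weight matching $M^\star$ in $G$, I would show that each $e \in M^\star$ has at least one representative in $H$ except with probability $(1-c/k)^k = e^{-\Theta(c)}$, since the events $\{e \in G^{(i)}\}$ are independent with probability $c/k$ (this is exactly the place where the product structure of the random clustering, as emphasized in Section~\ref{sec:coreset}, is used). More quantitatively, the number $T_e$ of subgraphs containing $e$ is $\mathrm{Bin}(k, c/k)$, concentrated around $c$.

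The main construction is to take $M_H := \Greedy(H, \pi_H)$, where $\pi_H$ orders the edges of $H$ by decreasing weight, and to show
\[
\Ex\!\left[w(M_H)\right] \;\geq\; \left(\tfrac{1}{2} - O\!\left(\tfrac{\log c}{c}\right)\right) w(M^\star),
\]
which suffices because $\opt(H) \geq w(M_H)$. I would proceed by charging each $e = (u,v) \in M^\star$ to edges of $M_H$ via the greedy property applied to $H$: taking any representative $\tilde h_e \in H$, either (i) $M_H$ contains an edge incident to $\{u,v\}$ of weight at least $w(\tilde h_e) \geq w(e)$, in which case the standard $2$-approximation charging for greedy goes through and $e$'s weight is absorbed into $2 w(M_H)$, or (ii) greedy on $H$ blocked $\tilde h_e$ via its far endpoint $y \notin \{u,v\}$, in which case $e$ is ``lost'' and its weight must be controlled probabilistically.

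The main obstacle---and the step that produces the $O(\log c /c)$ error---is bounding the expected total weight of edges lost to case (ii). Here I would exploit that $e$ has $T_e = \Theta(c)$ \emph{independent} representatives in $H$ coming from different subgraphs $G^{(i)}$, independent by the product structure of the random clustering. For $e$ to be lost, every one of these representatives must be blocked at its far endpoint, an event whose probability decays exponentially in $c$. To convert this into a worst-case bound that is uniform over $e$, I would discretize the weights into $O(\log c)$ geometric buckets of ratio $1 + \Theta(1/c)$ around $w(e)$ and apply a union bound inside each bucket, producing the $\log c$ factor. Summing the expected loss $O((\log c)/c)\, w(e)$ over $e \in M^\star$ yields the claimed bound, and combining with the small $e^{-\Theta(c)}$ loss from edges missing from every $G^{(i)}$ completes the proof.
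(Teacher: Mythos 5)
Your plan reduces the lemma, which is a statement about $\opt(H)$, to the much stronger claim $\Ex[\ww{\Greedy(H,\pi_H)}]\ge(\tfrac12-O(\tfrac{\log c}{c}))\cdot\opt(G)$. The paper does not do this: it proves the lemma by exhibiting an explicit matching inside $H$ built from the \emph{available free} edges $F_1\cap E(H)$ together with certificate edges drawn from a \emph{single} machine's matching $M_1$ (which is itself a matching, so the certificates never collide with one another), via the type-$1$/$2$/$3$ decomposition of Claim~\ref{clm:types} and Lemma~\ref{lem:wm-charging}; the only probabilistic ingredient is Lemma~\ref{lem:wm-prob}, which uses the product structure of the clustering to show that any edge that is free on machine $1$ with probability at least $4\log c/c$ is, with probability $1-O(1/c)$, free \emph{and present} on some other machine and hence lands in $H$. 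Your target statement, if true, would immediately improve Theorem~\ref{THM:GREEDY} from a $(3+\eps)$- to a $(2+\eps)$-approximation; the factor-revealing LP of Section~\ref{sec:thm-greedy} has optimal value exactly $1/3$ (and only $1/4$ without the fallback to returning $M_1$), so a local charging argument of the kind you sketch cannot establish it.

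The concrete gap is your treatment of case (ii). The assertion that ``for $e$ to be lost, every one of its $T_e=\Theta(c)$ representatives must be blocked at its far endpoint, an event whose probability decays exponentially in $c$'' is unsupported. First, the representatives $\tilde h_e^{(i)}$ from different machines need not be distinct edges of $H$: greedy on many machines can select the same heavy certificate incident to $e$, so $e$ may effectively have one representative regardless of $T_e$. Second, and more fundamentally, whether $\Greedy(H,\pi_H)$ blocks a given representative at its far endpoint is a deterministic function of the single pooled graph $H$ and of one run of greedy on it; it is not an independent coin per machine, and the independence of $\Gi{1},\ldots,\Gi{k}$ buys you nothing at this stage because all representatives are adjudicated jointly. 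For instance, each representative $(u,y_j)$ may have a much heavier edge at $y_j$ that belongs to $H$ with probability $1-e^{-\Omega(c)}$ and is processed by greedy on $H$ first, so that every representative is blocked at its far endpoint with probability close to $1$ rather than exponentially small. The geometric weight-bucketing and union bound do not repair this, since they are applied to events that were never shown to be individually small. This is exactly the difficulty the paper avoids by charging blocked optimal edges to their certificates inside the fixed matching $M_1$ when constructing a witness matching for $\opt(H)$, rather than asking $\Greedy(H)$ to recover them.
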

\noindent
Theorem~\ref{thm:weighted-matching-0.5} follows immediately from Lemma~\ref{lem:weighted-matching} (by setting $c = \Theta\paren{\frac{\log{(1/\eps)}}{\eps}}$). The rest of this section is devoted to the proof of Lemma~\ref{lem:weighted-matching}. 
	
	\ssection{Notation.} Let $\pi$ be a permutation of all edges in $G$ in non-increasing order of their weights (\emph{consistent} with orderings $\pi_i$ for $i \in [k]$). Notice that for each $i \in [k]$, $M_i = \Greedy(\Gi{i},\pi)$ as well. Hence, in the following, 
	we use the permutation $\pi$ instead of each $\pi_i$. Throughout the proof, we fix an arbitrary maximum weight matching $\Mstar$ in 
	$G$ and denote by $\ww{\Mstar} = \opt(G)$ the weight of $\Mstar$. For any edge $e \in \Mstar$, we use $\pi^{<e}$ to refer to the set of edges in $\pi$ that appear before $e$. We slightly abuse the notation and 
	use $\Greedy(G,\pi^{<e})$ to mean that we run $\Greedy(G,\pi)$ and stop exactly before processing the edge $e$, i.e., we only consider the edges in $\pi^{<e}$. 
	
	\begin{Definition}[Free/Blocked Edges]\label{def:free-blocked}
	We say that an edge $e \in \Mstar$ is \emph{free} for machine $i \in [k]$ iff no end point of $e$ is matched by $\Greedy(\Gi{i},\pi^{<e})$; otherwise we call $e \in \Mstar$ \emph{blocked}. We use $F_i$ to denote the 
	set of free edges in $\Gi{i}$ and $B_i$ to denote the blocked edges. 
	\end{Definition}
	
	We emphasize that in Definition~\ref{def:free-blocked}, an edge $e \in \Mstar$ can be free or blocked on some machine $i \in [k]$, without necessarily even appearing in $\Gi{i}$. In other words, this definition is independent of whether 
	$e$ belongs to $\Gi{i}$ or not. However, notice that if an edge $e$ is free on machine $i$ and it also appears in $\Gi{i}$, then $e$ would definitely belong to the matching $M_i$ (i.e., the coreset on machine $i$). On the other hand, 
	if an edge $e$ is blocked in machine $i$, then necessarily some edge $e'$ exists in $M_i$ such that $e'$ is incident on $e$ and $\ww{e'} \geq \ww{e}$. We refer to $e'$ as the \emph{certificate} of $e$ in machine $i$.

	\ssection{Overview.} The idea behind the proof of Lemma~\ref{lem:weighted-matching} is as follows. Recall that the distribution of each graph $\Gi{i}$ in the random clustering is the same, and is independent of other graphs. Hence, we can focus on each 
	machine $i \in [k]$, say machine $1$, separately. Consider blocked edges $B_1$ in machine $1$: for any such edge, we have already picked another edge with at least the same weight in the matching $M_1$ of machine $1$ (by definition of an edge being 
	blocked). 	
	We can hence use a simple charging argument here to argue that the matching $M_1$ of machine $1$ already has enough edges to ``compensate'' for blocked edges of $\Mstar$ that were not picked by machine $1$. \
	
	The main part of the argument is 
	however to show that we can find enough edges in $M_2,\ldots,M_k$ chosen by other machines that can be added to $M_1$ to also compensate for free edges in machine $1$. The idea here is that since the distribution of input to all machines
	is the same and is independent across, if an edge $e$ is free in machine $1$, it is ``most likely'' free in \emph{many} other machines as well, in particular, in a machine $j \in [k] \setminus \set{1}$ which also \emph{contains} this edge. By definition, this edge
	then would be chosen in matching $M_j$. We then use another careful charging scheme to argue that we can indeed ``augment'' the matching $M_1$ by free edges in $F_1$ that appear in $M_2,\ldots,M_k$ to obtain an almost-two approximation. The main 
	difficulty here is that 
	even though edges in $F_1$ were free in machine $1$, they may still be incident on edges in $M_1$ with equal or smaller weight (as being free only implies that these edges were not incident on
	 edges with \emph{higher} weight) and hence they cannot be readily 
	added to $M_1$; this is the reason we need to find ``short augmenting paths'' in $F_1 \cup M_1$ which may require switching some edges out of $M_1$ as well. 
	
	\smallskip
	
	We now start with the formal proof. Throughout, define $F'_1 := F_1 \cap E(H)$, i.e., the set of free edges in machine $1$ that are present in $H$. We refer to these
	edges as \emph{available} free edges (we prove later that essentially any free edge is also available with a large probability). 
	To perform the charging, we need to partition the edges of $F'_1,B_1$ and $M_1$ as follows (see Figure~\ref{fig:types} for an illustration):  
	
\begin{figure*}[h!]
    \centering
    \subcaptionbox{A set of type $1$ edges.}[0.3\textwidth]{

    \begin{tikzpicture}[ auto ,node distance =1cm and 2cm , on grid , semithick , state/.style ={ circle ,top color =white , bottom color = white , draw, black , text=black}, every node/.style={inner sep=0,outer sep=0}]

\node[state, circle, black, line width=0.35mm, minimum height=7pt, minimum width=7pt] (a1){};
\node[state, circle, black, line width=0.35mm, minimum height=7pt, minimum width=7pt] (a2) [below =1.5cm of a1]{};
\node[state, circle, black, line width=0.35mm, minimum height=7pt, minimum width=7pt] (a3) [right=3cm of a1]{};
\node[state, circle, black, line width=0.35mm, minimum height=7pt, minimum width=7pt] (a4) [below=1.5cm of a3]{};

\draw[line width=0.5mm, red](a1) -- node[above=7pt of a1]{\textcolor{black}{$f$}} (a3);
\draw[line width=0.5mm, red] (a2) -- node[below=7pt of a2]{\textcolor{black}{$e$}} (a4);
\draw[line width=0.2mm] (a1) -- node [below=7pt of a1]{$e'$} (a4);

\end{tikzpicture}

}
\hspace{0.2cm}   
    \subcaptionbox{A set of type $2$ edges.}[0.3\textwidth]{

    \begin{tikzpicture}[auto ,node distance =1cm and 2cm , on grid , semithick , state/.style ={ circle ,top color =white , bottom color = white , draw, black , text=black}, every node/.style={inner sep=0,outer sep=0}]

\node[state, circle, black, line width=0.35mm, minimum height=7pt, minimum width=7pt] (a1){};
\node[state, circle, black, line width=0.35mm, minimum height=7pt, minimum width=7pt] (a2) [below =1.5cm of a1]{};
\node[state, circle, black, line width=0.35mm, minimum height=7pt, minimum width=7pt] (a3) [right=3cm of a1]{};
\node[state, circle, black, line width=0.35mm, minimum height=7pt, minimum width=7pt] (a4) [below=1.5cm of a3]{};

\draw[line width=0.5mm, ForestGreen, dashed](a1) -- node[above=7pt of a1]{\textcolor{black}{$f$}} (a3);
\draw[line width=0.5mm, red] (a2) -- node[below=7pt of a2]{\textcolor{black}{$e$}} (a4);
\draw[line width=0.2mm] (a1) -- node [below=7pt of a1]{$e'$} (a4);

\end{tikzpicture}

}\hspace{0.2cm}
    \subcaptionbox{A set of type $3$ edges.}[0.3\textwidth]{

    \begin{tikzpicture}[auto ,node distance =1cm and 2cm , on grid , semithick , state/.style ={ circle ,top color =white , bottom color = white , draw, black , text=black}, every node/.style={inner sep=0,outer sep=0}]

\node[state, circle, black, line width=0.35mm, minimum height=7pt, minimum width=7pt] (a1){};
\node[state, circle, black, line width=0.35mm, minimum height=7pt, minimum width=7pt] (a2) [below =1.5cm of a1]{};
\node[state, circle, black, line width=0.35mm, minimum height=7pt, minimum width=7pt] (a3) [right=3cm of a1]{};
\node[state, circle, black, line width=0.35mm, minimum height=7pt, minimum width=7pt] (a4) [below=1.5cm of a3]{};
\node[state, circle, black, line width=0.35mm, minimum height=7pt, minimum width=7pt] (a5) [above=1.5cm of a1]{};
\node[state, circle, black, line width=0.35mm, minimum height=7pt, minimum width=7pt] (a6) [right=3cm of a5]{};

\draw[line width=0.5mm, red] (a5) -- node[above=7pt of a5]{\textcolor{black}{$e$}} (a6);
\draw[line width=0.2mm] (a5) -- node [above=9pt of a1]{$e''$} (a3);
\draw[line width=0.5mm, ForestGreen, dashed](a1) -- node[above=7pt of a1]{\textcolor{black}{$f$}} (a3);
\draw[line width=0.5mm, red] (a2) -- node[below=7pt of a2]{\textcolor{black}{$e$}} (a4);
\draw[line width=0.2mm] (a1) -- node [below=7pt of a1]{$e'$} (a4);

\end{tikzpicture}

}
    
    \caption{Illustration of the partitioning used in Lemma~\ref{lem:weighted-matching}. Thick solid edges (red) are blocked edges, thick dashed edges (green) are available free edges, and normal edges (black) are certificate edges.}
    \label{fig:types}
\end{figure*}
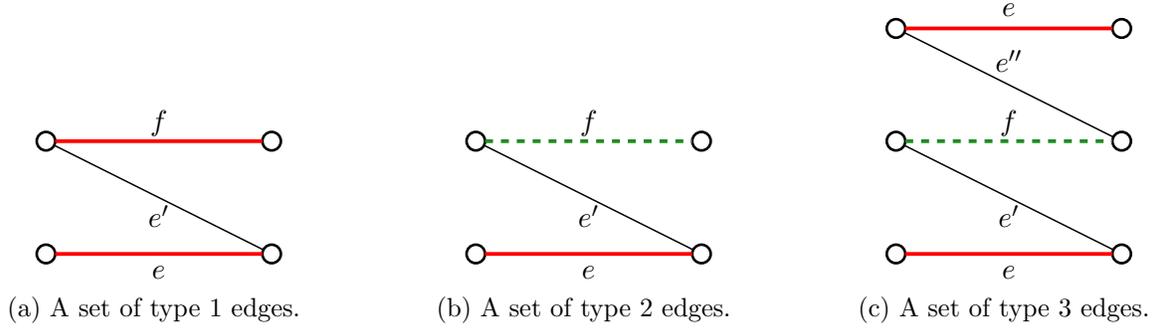

	\begin{enumerate}[leftmargin=20pt]
		\item Let $e$ be an edge with maximum weight in $B_1$ and $e'$ be its certificate in $M_1$.  Additionally, let $f$ be the \emph{other} edge incident on $e'$ in $F'_1 \cup B_1$ (set $f =\: \perp$ if no such edge exists). 
		\item \textbf{Type $\mathbf{1}$ edges:} If $f$ belongs to $B_1$ or is $\perp$, then add $e'$ to $M_{1,1}$ and $e ,f$ to $B_{1,1}$. Remove \emph{both} $e$ and $f$ from $B_1$. We refer to $(e,f,e')$ as a set of type $1$ edges. 
		\item \textbf{Type $\mathbf{2}$ edges:} If $f \in F'_1$ and $f$ is \emph{not} incident on any certificate edge other than $e'$, then add $f$ to $F'_{1,2}$, $e$ to $B_{1,2}$, and $e'$ to $M_{1,2}$. Remove $f$ from $F'_1$ and $e$ from $B_1$. 
		We refer to $(e,f,e')$ as a set of type $2$ edges. 
		\item \textbf{Type $\mathbf{3}$ edges:} If $f \in F'_1$ and $f$ is incident on another certificate edge $e''$ which is a certificate for some edge $z \in B_1$, then add $f$ to $F'_{1,3}$, $e,z$ to $B_{1,3}$, and $e',e''$ to $M_{1,3}$. Remove 
		$f$ from $F'_{1,3}$ and $e,z$ from $B_{1,3}$. We refer to $(e,f,z,e',e'')$ as a set of type $3$ edges. 
		\item Continue the process from first line until no edge remains in $B_1$. 	Add the remaining edges in $F'_1$ after this step the set to $F'_{1,0}$. 
	\end{enumerate}
	\noindent
	It is immediate to verify that $F'_1 = F'_{1,0} \cup F'_{1,2} \cup F'_{1,3}$, $B_1 = B_{1,1} \cup B_{1,2} \cup B_{1,3}$ and $M_{1,1} \cup M_{1,2} \cup M_{1,3} \subseteq M_1$, and all these sets are pairwise disjoint. 
	The following claim states  properties of this partitioning and the charging scheme. 

	\begin{claim}\label{clm:types}
		In the partitioning scheme, for any set of edges: 
		\begin{itemize}[leftmargin=10pt]
		\item type $1$ $(e,f,e')$: $\ww{e'} \geq \frac{1}{2} \cdot \paren{\ww{e} + \ww{f}}$. 
		\item type $2$ $(e,f,e')$: $\ww{f} \geq \ww{e'} \geq \ww{e}$. 
		\item type $3$ $(e,f,z,e',e'')$: $\max\set{\ww{f}, \ww{e'} + \ww{e''}} \geq \frac{1}{2} \cdot \paren{\ww{e}+\ww{f}+\ww{z}}$. 
		\end{itemize}
	\end{claim}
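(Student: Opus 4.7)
The plan is to prove all three bounds via a short case analysis, built on two very simple observations: (i) for each blocked edge $e \in B_1$, its certificate $e' \in M_1$ satisfies $\ww{e'} \ge \ww{e}$ by definition; and (ii) if $f$ is free on machine $1$ and $g \in M_1$ is incident to $f$, then $\ww{g} \le \ww{f}$. Observation (ii) is the workhorse and follows from the non-increasing weight ordering of $\pi$ (with consistent tie-breaking): were $g$ to appear strictly before $f$ in $\pi$, then at the time $g$ is processed both of its endpoints would be unmatched (since $g \in M_1$), and the common endpoint with $f$ would become matched, contradicting that $f$ is free. Hence $g$ appears no earlier than $f$ in $\pi$, so $\ww{g} \le \ww{f}$.

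For a \emph{type $1$} set $(e,f,e')$, observation (i) gives $\ww{e'} \ge \ww{e}$. If $f = \perp$, the claim is immediate since then $\ww{e'} \ge \ww{e} = \tfrac12(\ww{e}+\ww{f})$ trivially (treating $\ww{\perp}=0$). Otherwise $f \in B_1$ at the moment $e$ was selected, and because $e$ was chosen as the heaviest remaining edge in $B_1$ we have $\ww{e} \ge \ww{f}$. Thus $\ww{e'} \ge \ww{e} \ge \ww{f}$ and in particular $\ww{e'} \ge \tfrac12(\ww{e}+\ww{f})$.

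For a \emph{type $2$} set $(e,f,e')$, we again have $\ww{e'} \ge \ww{e}$. Since $f \in F'_1 \subseteq F_1$ is free on machine $1$ and $e' \in M_1$ is incident to $f$, observation (ii) yields $\ww{f} \ge \ww{e'}$, so $\ww{f} \ge \ww{e'} \ge \ww{e}$, as claimed.

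For a \emph{type $3$} set $(e,f,z,e',e'')$, observation (i) gives $\ww{e'} \ge \ww{e}$ and $\ww{e''} \ge \ww{z}$, hence $\ww{e'} + \ww{e''} \ge \ww{e} + \ww{z}$. I split into two cases on $\ww{f}$. If $\ww{f} \ge \ww{e} + \ww{z}$, then $\ww{f} \ge \tfrac12(\ww{e} + \ww{f} + \ww{z})$, which suffices. Otherwise $\ww{e} + \ww{z} \ge \ww{f}$, and then
\[
2\bigl(\ww{e'} + \ww{e''}\bigr) \;\ge\; 2\bigl(\ww{e} + \ww{z}\bigr) \;\ge\; \ww{e} + \ww{z} + \ww{f},
\]
so $\ww{e'}+\ww{e''} \ge \tfrac12(\ww{e}+\ww{f}+\ww{z})$. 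Taking the max over the two cases gives the claimed bound. I do not expect any genuine obstacle here; the only subtle point is isolating observation (ii), which turns the ``free vs.\ matching'' interaction into a clean weight comparison and makes each case a one-line inequality.
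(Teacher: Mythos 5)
Your proof is correct and follows essentially the same route as the paper's: certificate edges dominate their blocked edges, the maximality of $e$ in $B_1$ gives $\ww{e}\geq\ww{f}$ for type $1$, and freeness of $f$ forces $\ww{f}\geq\ww{e'}$ for type $2$. You merely spell out the ``simple calculation'' for type $3$ (the two-case split on whether $\ww{f}\geq\ww{e}+\ww{z}$) and isolate the freeness observation as an explicit lemma, both of which the paper leaves implicit.
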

	\begin{proof}
		\begin{itemize}[leftmargin=20pt]
			\item Since $e$ is chosen before $f$, $\ww{e} \geq \ww{f}$. Also, since $e'$ is a certificate of $e$, $\ww{e'} \geq \ww{e}$, a simple calculation finalizes the proof. 
			\item Since $f$ is free, it means $e'$ is chosen after visiting $f$; hence $\ww{f} \geq \ww{e'}$. Since $e'$ is a certificate for $e$, $\ww{e'} \geq \ww{e}$. 
			\item Follows since $e'$ and $e''$ are certificates for $e$ and $z$, respectively, and hence $\ww{e'} \geq \ww{e}$ and $\ww{e''} \geq \ww{z}$. 
		\end{itemize}
		This finalizes the proof.
	\end{proof}	
	We now lower bound the weight of the maximum weight matching in the graph $H$. 
	
	\begin{lemma}\label{lem:wm-charging}
		$\opt(H) \geq \frac{1}{2} \cdot \paren{\ww{F'_1} + \ww{B_1}}$. 
	\end{lemma}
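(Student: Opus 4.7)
The plan is to exhibit an explicit matching $N \subseteq E(H)$ with $\ww{N} \ge \tfrac{1}{2}(\ww{F'_1} + \ww{B_1})$; since $\opt(H) \ge \ww{N}$, this will finish the proof. The partitioning from the preceding paragraphs already groups the edges of $B_1$ and $F'_1$ into types $1$, $2$, $3$, together with a leftover set $F'_{1,0}$, and associates to each type a small collection of candidate $M_1$- and $F'_1$-edges which, by Claim~\ref{clm:types}, carry enough weight. All of these candidate edges lie in $E(H)$ since $M_1 \subseteq H$ and $F'_1 \subseteq H$ by construction.

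For the selection, I would put into $N$ the certificate $e'$ from every type~$1$ triple, the edge $f$ from every type~$2$ triple, from each type~$3$ triple whichever of $\{f\}$ or $\{e', e''\}$ has the larger total weight, and every edge of $F'_{1,0}$. Summing the contributions using Claim~\ref{clm:types}: each type~$1$ triple contributes at least $\tfrac{1}{2}(\ww{e}+\ww{f})$, each type~$2$ triple contributes $\ww{f} \ge \tfrac{1}{2}(\ww{e}+\ww{f})$ (since $\ww{f} \ge \ww{e}$), each type~$3$ triple contributes at least $\tfrac{1}{2}(\ww{e}+\ww{f}+\ww{z})$, and $F'_{1,0}$ contributes its full weight. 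These sum to at least $\tfrac{1}{2}\ww{B_1}+\tfrac{1}{2}\ww{F'_1}$.

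The main obstacle will be verifying that $N$ is in fact a matching. Edges of $N$ drawn from $M_1$ are pairwise disjoint (as $M_1$ is a matching), and edges of $N$ drawn from $\Mstar$ are pairwise disjoint (as $F'_1 \subseteq \Mstar$ is a matching), so the only possible conflict is between an $M_1$-edge and an $\Mstar$-edge of $N$. Here I would appeal to the partition rules: for a type~$1$ certificate $e'$, the two $\Mstar$-edges incident on $e'$ are $e$ and $f$, both blocked or absent, so neither is in $N$; for a type~$2$ pick $f$, the partition rule explicitly requires $f$ to be incident on no certificate other than $e'$, so no other $M_1$-edge in $N$ touches $f$, and $e'$ itself was not placed in $N$; for a type~$3$ $\{e', e''\}$ branch, the only $\Mstar$-edges at the endpoints of $e'$ and $e''$ are $e$, $z$, and $f$, all excluded from $N$; symmetrically, if the $\{f\}$ branch is chosen then $e'$ and $e''$ are the only $M_1$-edges at $f$'s endpoints and neither is in $N$; and for an edge $f \in F'_{1,0}$, the fact that $f$ survived the partition process means it is not incident on any certificate edge (else it would have been selected as ``the other edge'' for some triple), so no $M_1$-edge of $N$ is adjacent to it.

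Once these no-conflict checks are assembled into the claim that $N$ is a matching of $H$, the weight computation above immediately gives $\opt(H) \ge \ww{N} \ge \tfrac{1}{2}(\ww{F'_1}+\ww{B_1})$, completing the proof of Lemma~\ref{lem:wm-charging}.
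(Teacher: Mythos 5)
Your proposal is correct and follows essentially the same route as the paper's proof: the identical selection rule (type-$1$ certificate $e'$, type-$2$ edge $f$, the heavier of $\set{f}$ versus $\set{e',e''}$ for type $3$, plus all of $F'_{1,0}$) combined with the weight bounds of Claim~\ref{clm:types}. The only difference is that you spell out the verification that the selected edges form a matching, a step the paper's proof declares ``immediate to verify,'' and your checks there are sound.
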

	\begin{proof}
		
		Define $M' := F'_1 \cup B_1 \subseteq \Mstar$, and hence $\ww{M'} = \ww{F'_1} + \ww{B_1}$. 
		To prove the lemma, it suffices to show that there exists a matching $M$ in $H$ with $w(M) \geq \frac{1}{2} \cdot w(M')$. Notice that edges in $F'_1$ ($\subseteq M'$) are already in $H$ and hence can be used in $M$. 
		Moreover, for any edge $e \in B_1$, a certificate of $e$ also belongs to the matching $M_1$. We now show how to design the matching $M$ using only the certificate edges and edges in $F'_1$ such
		that $\ww{M} \geq \frac{1}{2} \cdot \ww{M'_1}$. We use the partitioning scheme in Claim~\ref{clm:types} to construct $M$ as follows: 
		\begin{enumerate}[leftmargin=20pt]
			\item For a set of type $1$ edges $(e,f,e')$: add $e' \in M_1$ to the matching $M$. 
			\item For a set of type $2$ edges $(e,f,e')$: add $f \in F'_1$ to $M$.
			\item For a set of type $3$ edges $(e,f,z,e',e'')$: if $\ww{f} \geq \ww{e'}+\ww{e''}$, add $f \in F'_1$ to $M$, otherwise add $e',e'' \in M_1$ to $M$.
			\item Add any edge in $F'_{1,0}$ to $M$. 
		\end{enumerate}
		\noindent
		It is immediate to verify that all edges added to $M$ indeed belong to $H$ and they also form a matching. 
		To lower bound the weight of $M$, notice that by Claim~\ref{clm:types}, for any types of edges in $M' \cup M_1$, at least half the weights of edges in $M'$ also appear in $M$. This immediately implies that $\ww{M} \geq \frac{1}{2} \cdot \ww{M'}$,
		finalizing the proof. 
	 \end{proof}
	
	We now use the randomness in the clustering to argue that nearly all free edges on machine $1$ are also available. 
	
	\begin{lemma}\label{lem:wm-prob}
		$\Ex\Bracket{\ww{F'_1}} \geq \Ex\Bracket{\ww{F_1}} - O\paren{\frac{\log{c}}{c}} \cdot \opt.$ 
	\end{lemma}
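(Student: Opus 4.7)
My plan is, for each $e \in \Mstar$, to bound the probability $\Pr[e \in F_1 \setminus E(H)]$ by $O\!\paren{\log c/c}$; linearity of expectation will then give
\[
\Ex\!\bracket{\ww{F_1}} - \Ex\!\bracket{\ww{F'_1}} \;=\; \sum_{e \in \Mstar} \ww{e} \cdot \Pr[e \in F_1 \setminus E(H)] \;\le\; O\!\paren{\log c/c} \cdot \ww{\Mstar} \;=\; O\!\paren{\log c/c} \cdot \opt.
\]
The crucial enabler is the \emph{expected-multiplicity} form of Definition~\ref{def:randomized-coreset}: the binomial-then-uniform prescription is equivalent to tossing, for every (edge, machine) pair, an \emph{independent} bias-$(c/k)$ coin to decide inclusion. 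Consequently, the subgraphs $\Gi{1},\ldots,\Gi{k}$ are mutually independent, which is exactly the product-distribution property highlighted right after the definition and what I intend to exploit.

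Fix $e \in \Mstar$ and introduce the events $A_j := \{e \text{ is free on machine } j\}$ and $B_j := \{e \in \Gi{j}\}$. Freeness of $e$ on machine $j$ is determined entirely by which edges of $\pi^{<e}$ lie in $\Gi{j}$, and $e \notin \pi^{<e}$, so $A_j$ is independent of $B_j$ within each machine; combined with cross-machine independence, the pairs $(A_j,B_j)$ are jointly independent across $j$. Moreover, $\Greedy$ places an edge into $M_j$ exactly when it is free for and present in machine $j$, so $\{e \in E(H)\} = \bigcup_{j=1}^k (A_j \cap B_j)$. Writing $p_e := \Pr[A_j]$ (the same for every $j$ by symmetry) and using $\Pr[A_j \cap B_j] = p_e \cdot c/k$, I obtain
\begin{align*}
\Pr\bracket{e \in F_1 \setminus E(H)} &= \Pr\Bracket{A_1 \cap \overline{B_1} \cap \bigcap_{j \ge 2} \overline{A_j \cap B_j}} \\
&= p_e \cdot \paren{1 - \tfrac{c}{k}} \cdot \paren{1 - \tfrac{p_e c}{k}}^{k-1} \;\le\; p_e \cdot e^{-p_e c/2},
\end{align*}
where the last inequality uses $(1-x)^n \le e^{-xn}$ and $(k-1)/k \ge 1/2$ for $k \ge 2$.

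A simple dichotomy now bounds $p_e \cdot e^{-p_e c/2}$ uniformly by $O(\log c / c)$: if $p_e \le (2\log c)/c$, use the trivial upper bound $p_e$; otherwise $e^{-p_e c/2} \le 1/c$, so the product is at most $p_e/c \le 1/c$. Multiplying by $\ww{e}$ and summing over $\Mstar$ closes the argument. In my view, the main delicacy lies not in the arithmetic but in verifying the two independence claims that make the product formula above legal---namely, that the binomial-then-uniform sampling really does decompose into independent per-machine Bernoulli coins, and that $A_j$ is measurable with respect to the edges of $\Gi{j}$ other than $e$ itself. Once those are made precise, the product decomposition for $\Pr[e \in F_1 \setminus E(H)]$ is forced and the rest is routine.
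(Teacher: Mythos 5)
Your proposal is correct and follows essentially the same route as the paper's proof: both rest on the product structure of the random clustering, the identity $\Pr(e \in M_j) = \Pr(e \in F_j)\cdot c/k$ (using that freeness depends only on $\pi^{<e}$), the bound $(1-p_e c/k)^{k-1} \le e^{-\Omega(p_e c)}$, and a threshold on $p_e$ at $\Theta(\log c/c)$ --- your two-case dichotomy is exactly the paper's split into the set $S$ of edges with large free-probability versus the rest. The only (cosmetic) difference is that you bound the complementary quantity $\Ex[\ww{F_1}] - \Ex[\ww{F'_1}]$ by a uniform per-edge estimate, whereas the paper lower-bounds $\Ex[\ww{F'_1}]$ directly.
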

	\begin{proof}
		We first have, 
		\begin{align*}
			\Ex\Bracket{\ww{F'_1}} &= \sum_{e \in \Mstar} \ww{e} \cdot \Pr\paren{\text{$e \in F_1 \wedge e \in M_j$ for some $j \in [k]$}}\\ 
			&\geq \sum_{e \in \Mstar} \ww{e} \cdot \Pr\paren{\text{$e \in F_1$}} \cdot \Pr\paren{\text{$e \in M_j$ for some $j \in [k]$}}
		\end{align*}
		This is because random clustering induces a product distribution on inputs across machines and hence the fact that $e$ is free in $\Gi{1}$ is independent of whether $e$ is picked in some other matching $M_j$ for $j \neq i$. 
		We can now calculate the probability that an edge $e$ belongs to a matching $M_j$ for a fixed $j \in [k]$. 
		\begin{align}
			\Pr\paren{\text{$e \in M_j$}} &= \Pr\paren{\text{$e \in F_j$ and $e$ is sampled in $\Gi{j}$}} \notag \\
			&= \Pr\paren{\text{$e \in F_j$}} \cdot \Pr\paren{\text{$e \in \Gi{j}$}} \tag{the event ``$e \in F_j$'' only depends on the edges in $\pi^{<e}$} \\
			&= \Pr\paren{\text{$e \in F_j$}} \cdot \frac{c}{k} \label{eq:prob-Mj},
		\end{align}
		since each edge appears in $\Gi{j}$ with probability $c/k$. Now notice that the marginal distribution of the graph $\Gi{1}$ and $\Gi{j}$ under random clustering is the same; as a result the probability that an edge is good in $\Gi{j}$ is equal to
		this probability for the graph $\Gi{1}$ as well. Using this, plus the fact that the event that $e$ belongs to $M_j$ is independent of all other graphs $\Gi{\ell}$ for $\ell \neq j$, we have, 
		\begin{align*}
			\Pr\paren{\text{$e \in M_j$ for $j \in [k]\setminus\set{1}$}} &= 1- \prod_{j \in [k] \setminus \set{1}}\Pr\paren{e \notin M_j}  \\
			&=  1- \prod_{j \in [k] \setminus \set{1}}\Paren{1-\Pr\paren{\text{$e \in F_j$}} \cdot \frac{c}{k}} \tag{by Eq~(\ref{eq:prob-Mj})}\\
			&= 1- \Paren{1-\Pr\paren{\text{$e \in F_1$}} \cdot \frac{c}{k}}^{k-1}.
		\end{align*}
		
		Define $S \subseteq \Mstar$ as the set of all edge $e \in \Mstar$ such that $\Pr\paren{\text{$e$ is free in $\Gi{i}$}} \geq \frac{4\log{c}}{c}$. By above equation, for any edge $e \in S$, we have that, 
		\begin{align*}
			\Pr\paren{\text{$e\in M_j$ for $j \in [k]\setminus\set{1}$}} &\geq 1- \Paren{1-\frac{4\log{c}}{c} \cdot \frac{c}{k}}^{k-1} \geq 1 - e^{\paren{-2\log{c}}} = 1-O(1/c). 
		\end{align*}
		Consequently, 
		\begin{align*}
			\Ex\Bracket{\ww{F'_1}} & \geq \sum_{e \in S}  \ww{e} \cdot \Pr\paren{e \in F_1}  \cdot \Pr\paren{\text{$e \in M_j$ for $j \in [k]\setminus\set{1}$}}\\
			&\geq \sum_{e \in S} \ww{e} \cdot \Pr\paren{\text{$e \in F_1$}} \cdot  \paren{1-O(1/c)}\\
			&= \paren{1-O(1/c)} \cdot \\ 
			&\paren{\sum_{e \in \Mstar} \Pr\paren{\text{$e \in F_1$}} \cdot \ww{e} - \sum_{e \in \Mstar\setminus S} \Pr\paren{\text{$e \in G_1$}} \cdot \ww{e}} \\
			&\geq \paren{1-O(1/c)} \cdot \paren{\Ex\bracket{\ww{F_i}} - \opt \cdot O(\frac{\log{c}}{c})} \\
			&\geq \Ex\bracket{\ww{F_i}} - O\paren{\frac{\log{c}}{c}} \cdot \opt,
		\end{align*}
		which finalizes the proof. 
	\end{proof}
		
	Lemma~\ref{lem:weighted-matching} now follows immediately from Lemma~\ref{lem:wm-charging} and Lemma~\ref{lem:wm-prob} as 
	\begin{align*}
		\Ex\bracket{\opt(H)} &\Geq{Lemma~\ref{lem:wm-charging}} \frac{1}{2} \cdot \Ex\Bracket{\ww{F'_1} + \ww{B_1}} \\
		&\Geq{Lemma~\ref{lem:wm-prob}} \frac{1}{2} \cdot \Ex\bracket{\ww{F_1} + \ww{B_1}} - O\paren{\frac{\log{c}}{c}} \cdot \opt, 
	\end{align*}
	since $F_1 \cup B_1 = \Mstar$ and $\ww{\Mstar} = \opt$. This concludes proof of Lemma~\ref{lem:weighted-matching} and Theorem~\ref{thm:weighted-matching-0.5}.
	
\subsection{MapReduce Algorithms for Maximum Weight Matching}\label{sec:wm-mapreduce}

We now present our MapReduce algorithm based on the coreset in Theorem~\ref{thm:weighted-matching-0.5}. 

\begin{theorem}\label{thm:main}
	There is a randomized MapReduce algorithm that for any $\eps > 0$, outputs a $(2+\eps)$-approximation to maximum weight matching in expectation using $O(\sqrt{\frac{m \cdot \log{(1/\eps)}}{\eps \cdot n}})$ 
	machines	each with $O(\sqrt{mn \cdot (1/\eps) \cdot \log{(1/\eps)}} + n)$ memory and in only two rounds of parallel computation. The local computation on each machine requires
	$O((1/\eps) \cdot \log{(1/\eps)} \cdot \sqrt{mn \cdot (1/\eps) \cdot \log{(1/\eps)}} + n)$ time. 
	Here, $m$ and $n$ denote the number of edges and vertices in the input 
	graph, respectively. 
\end{theorem}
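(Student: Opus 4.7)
The plan is to directly instantiate the MapReduce template described in Section~\ref{sec:coreset} using the $\GreedyCoreset$ from Theorem~\ref{thm:weighted-matching-0.5}. I would start by fixing the expected multiplicity at $c = \Theta(\log(1/\eps)/\eps)$ so that Lemma~\ref{lem:weighted-matching} yields a $(2+\eps/3)$-approximate randomized composable coreset on the union-of-matchings graph $H$, and I would take the per-machine coreset size bound to be $s = O(n)$ since a matching has at most $n/2$ edges. The template then prescribes $k = \sqrt{mc/s} = \Theta(\sqrt{m\log(1/\eps)/(\eps n)})$ machines, which is exactly the claimed machine count.

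In round~$1$ each machine $i$ computes $M_i = \Greedy(\Gi{i},\pi_i)$ with consistent tie-breaking as in Section~\ref{sec:weighted-matching}; in round~$2$ a single coordinator collects the graph $H(V,\bigcup_i M_i)$ and returns a $(1+\eps/3)$-approximate MWM on it using the sequential algorithm of Duan--Pettie~\cite{DuanP14}. Composing the two bounds multiplicatively and noting that $(2+\eps/3)(1+\eps/3)\leq 2+\eps$ for small $\eps$ gives the claimed $(2+\eps)$-approximation in expectation.

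For the resource accounting, each $\Gi{i}$ receives $\Theta(mc/k) = \Theta(\sqrt{mn\log(1/\eps)/\eps})$ edges in expectation, and the coordinator in round~$2$ receives at most $ks = O(\sqrt{mn\log(1/\eps)/\eps})$ matching edges; combined with the $n$ vertices this matches the stated per-machine memory. In round~$1$ the work per machine is dominated by sorting and greedy, which is near-linear in the input; in round~$2$ the Duan--Pettie runtime on $|E(H)|$ edges is $O((|E(H)|/\eps)\log(1/\eps))$, which evaluates to the stated $O((1/\eps)\log(1/\eps)\cdot\sqrt{mn\log(1/\eps)/\eps} + n)$ bound.

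The one subtle point I anticipate is showing that the memory budget is not violated by an unlucky realization of the random clustering. I would handle this by applying a Chernoff bound to the per-machine edge count: since the multiplicities $c_e$ are independent binomials and the per-machine total has mean $\Theta(\sqrt{mn\log(1/\eps)/\eps})$, the actual total is within a constant factor of its mean except with probability $1/\poly(n)$. On the negligible-probability bad event the algorithm can simply output an empty matching; since Theorem~\ref{thm:weighted-matching-0.5} is stated as an expectation bound, this $1/\poly(n)$ loss is absorbed into the $\eps$ slack in the approximation ratio.
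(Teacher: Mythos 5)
Your proposal is correct and follows essentially the same route as the paper: instantiate the generic coreset-to-MapReduce reduction of Section~\ref{sec:coreset} with the $\GreedyCoreset$ of Theorem~\ref{thm:weighted-matching-0.5} at multiplicity $c = \Theta(\eps^{-1}\log(1/\eps))$, compose the coreset guarantee with the Duan--Pettie post-processing, and read off the machine count, memory, and running time from $k=\sqrt{mc/s}$ with $s=O(n)$. The concentration argument for the per-machine edge count that you add explicitly is the same high-probability bound the paper already records in Section~\ref{sec:coreset}, so nothing is missing on either side.
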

\begin{proof}
	We simply use the randomized coreset in Theorem~\ref{thm:weighted-matching-0.5} which has expected multiplicity $c := O((1/\eps) \cdot \log{(1/\eps)})$ with the reduction in Section~\ref{sec:coreset}. The bound on the
	number of rounds, memory per-machine, and the total number of machines follows immediately from this and the discussion in Section~\ref{sec:coreset}. To obtain the bounds on the running time of the algorithm, 
	we simply need to run $\Greedy$ to compute each coreset which takes $O(\sqrt{mn \cdot (1/\eps) \cdot \log{(1/\eps)}} + n)$ time as the total number of edges in the input of each machine is bounded by $O(\sqrt{mn \cdot (1/\eps) \cdot \log{(1/\eps)}})$. 
	To compute the final answer, we run the algorithm of~\cite{DuanP14} which takes $O(m' \cdot (1/\eps) \cdot \log{(1/\eps)} + n)$ time on an $n$-vertex graph with $m'$ edges, where $m' = O(\sqrt{mn \cdot (1/\eps) \cdot \log{(1/\eps)}})$ here again. 
\end{proof}

The algorithm in Theorem~\ref{thm:main} implies a theoretically efficient (almost) $2$-approximation for MWM in the MapReduce model. Implementing this algorithm in practice however can be slightly challenging, simply
due to the post-processing step which requires computing an (almost) maximum weight matching (such algorithms tend to be tricky on general graphs, mainly to handle ``blossoms'' (see, e.g.,~\cite{GabowT91,DuanP14}) although one can
use \emph{any} readily available algorithm for MWM in this step). It is thus natural to consider simpler post-processing steps also that are easier to implement in practice. An obvious candidate here is the greedy algorithm itself. It follows already from the guarantee
of $\Greedy$ and Theorem~\ref{thm:weighted-matching-0.5} (by the same exact argument as in Theorem~\ref{thm:main}) that this would lead to an (almost) $4$-approximation. 
We next prove that this algorithm in fact already achieves an improved approximation  of $3$. 

\begin{theorem}\label{THM:GREEDY}
	The MapReduce algorithm for maximum weight matching obtained by applying $\Greedy$ as the post-processing step to $\GreedyCoreset$ always outputs a $(3+\eps)$-approximation in expectation. 
\end{theorem}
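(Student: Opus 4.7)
The plan is to argue directly about $M^g := \Greedy(H)$ rather than routing through $\opt(H)$: composing Lemma~\ref{lem:weighted-matching} with the generic $2$-approximation of $\Greedy$ against $\opt(H)$ only yields a $4$-approximation, so a sharper per-edge analysis is needed to recover the factor of $3$. Following the charging framework of Lemma~\ref{lem:weighted-matching}, for each $e^* \in F'_1 \cup B_1 \subseteq \Mstar$ I would designate a witness $r(e^*) \in H$: set $r(e^*) := e^*$ if $e^* \in F'_1$, and $r(e^*) := $ the certificate of $e^*$ in $M_1$ if $e^* \in B_1$, so that $\ww{r(e^*)} \geq \ww{e^*}$ in both cases. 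By the greedy property of $M^g$ on $H$ there exists some $a(e^*) \in M^g$ incident to $r(e^*)$ with $\ww{a(e^*)} \geq \ww{r(e^*)}$, to which I charge $e^*$; where multiple such $a(e^*)$ exist I break ties by picking the $M^g$-edge at the endpoint of $r(e^*)$ shared with $e^*$.

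The main technical step is to show that for every $a = (u, v) \in M^g$, the total charge satisfies $\sum_{e^* \,:\, a(e^*) = a} \ww{e^*} \leq 3\,\ww{a}$. The $e^*$'s charged to $a$ are of two kinds: (i) the $\Mstar$-edge at an endpoint of $a$ (charged via itself if free, or via the $M_1$-edge at that endpoint if blocked), and (ii) the $\Mstar$-edge at the far endpoint of an $M_1$-edge incident to $a$ (charged via this $M_1$-edge as certificate). Each kind contributes at most one edge per endpoint of $a$, giving a naive count of $4$; the \emph{closer endpoint} tie-breaking rule then reroutes a type-(ii) charge from $a$ to another $M^g$-edge $a^\#$ whenever the far endpoint in question is itself matched by $a^\#$. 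In the residual configuration where both far endpoints $u', v'$ remain unmatched by $M^g$, a case analysis against the type-$1$/$2$/$3$ partition from Claim~\ref{clm:types} (in particular the inequality $\ww{e'} \geq \max(\ww{e}, \ww{f})$ for pairs of blocked edges certified by the same $M_1$-edge) shows that the four charged $\Mstar$-edges still have combined weight at most $3\,\ww{a}$.

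Summing the per-edge bound over $a \in M^g$ then yields $\ww{F'_1} + \ww{B_1} \leq 3\,\ww{M^g}$; taking expectations and applying Lemma~\ref{lem:wm-prob} together with $\ww{F_1} + \ww{B_1} = \opt(G)$ gives $\Ex[\ww{M^g}] \geq (1/3 - O(\log c / c))\,\opt(G)$, and the choice $c = \Theta((1/\eps)\log(1/\eps))$ from Theorem~\ref{thm:weighted-matching-0.5} delivers the claimed $(3+\eps)$-approximation. The principal obstacle is the residual configuration in the per-edge bound: the naive count saturates at $4\,\ww{a}$, and recovering the factor $3$ requires carefully exploiting the structural constraints from Claim~\ref{clm:types} on the weights of certificates, blocked edges, and the skipped $M_1$-edges at $a$'s endpoints in order to show that not all four charges can simultaneously attain the worst-case weight $\ww{a}$.
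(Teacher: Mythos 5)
Your plan diverges from the paper's proof in a way that opens a genuine gap, and the gap sits exactly at the step you defer to ``a case analysis against the type-$1/2/3$ partition.''

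First, the algorithm the paper actually analyzes does \emph{not} output $\Greedy(H)$ alone: it outputs the better of $M_G := \Greedy(H)$ and the single-machine coreset $M_1$, and this fallback is essential to the factor $3$. The paper's argument is a factor-revealing LP combining (a) $\ww{M_1} \geq \frac{1}{2}\ww{B_{1,1}} + \ww{B_{1,2}} + \ww{B_{1,3}}$ (Lemma~\ref{lem:wm-greedy-m1}), (b) $\ww{M_G} \geq \frac{1}{2}\opt(H)$ together with the strengthened charging bound on $\opt(H)$ (Lemma~\ref{lem:wm-charging-stronger}), and (c) the free-vs-blocked inequalities; the LP optimum is $1/3$ only because of the $M_1$ constraint. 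If you drop that constraint---which is what analyzing $M^g$ in isolation amounts to---the same structural facts only certify $1/4$ (put all weight on $\beta_1$), which is precisely the ``pathological case'' the authors flag: when $\Mstar$ consists mostly of blocked edges, $H$ may contain only their certificates, so $\opt(H)\approx \frac{1}{2}\opt(G)$ and greedy on $H$ can lose another factor of $2$.

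Second, your central claim---that every $a=(u,v)\in M^g$ receives total charge at most $3\ww{a}$---appears to be false, and the constraints of Claim~\ref{clm:types} do not rescue it. Take $\ww{a}=1$ and consider: a free available $\Mstar$-edge $e^*_u=(u,u')$ of weight $1-\eps$; an $M_1$-edge $g_u=(u,w_u)$ of weight $1-2\eps$ that is the certificate of a blocked $\Mstar$-edge $z_u=(w_u,w_u')$ of weight $1-3\eps$, with $w_u$ unmatched in $M^g$; and the mirror image at $v$. This is realizable (e.g., $a, e^*_u, e^*_v \notin \Gi{1}$ while $g_u,g_v \in \Gi{1}$), it forms two type-$2$ triples, and it satisfies the type-$2$ inequality $\ww{f}\geq\ww{e'}\geq\ww{e}$ with all three weights within $O(\eps)$ of $1$. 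Your tie-breaking reroutes nothing here because the far endpoints $w_u,w_v$ are unmatched, so $a$ collects charges $e^*_u, e^*_v, z_u, z_v$ totalling $4-O(\eps)$. So the residual configuration you identify as the ``principal obstacle'' really does saturate at $4\ww{a}$, and no local weight argument from Claim~\ref{clm:types} can pull it down to $3\ww{a}$; escaping it requires the global observation that in such configurations $M_1$ itself (here $\{g_u,g_v\}$, weight $\approx 2$) is a good output---which is exactly the ingredient your proof omits.
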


We present the proof of Theorem~\ref{THM:GREEDY} in the next section.

\subsection{Proof of Theorem~\ref{THM:GREEDY}}\label{sec:thm-greedy}

	Proof of this theorem is based on a more careful analysis of $\GreedyCoreset$ and writing a factor-revealing LP for the overall approximation ratio of the algorithm. To be precise, in the post-processing step, we do as follows: we run the
	greedy algorithm on the union of coresets received from all machines and return either this matching, denoted by $M_G$, or simply return any arbitrary matching computing as a coreset on one machine, say $M_1$ on machine $1$, 
	as the final solution, based on which one has a larger weight (we note that returning matching $M_1$ is only used to handle the ``pathological case'' that arise in the analysis in which each graph $\Gi{i}$ already contains a very large weighted
	matching, i.e., even \emph{after} sampling $G$ by a factor of $\Theta(1/k)$, we can still find a near optimum solution in the sampled graph). 

	Recall the definition of (available) free and blocked edges from the proof of Lemma~\ref{lem:weighted-matching} and let $F'_1$, $B_1$, and $M_1$ be defined as before (along with their partitioning into different types). 
	We start with the following simple lemma. 
\begin{lemma}\label{lem:wm-greedy-m1}
	$\ww{M_1} \geq \frac{1}{2} \cdot \ww{B_{1,1}} + \ww{B_{1,2}} + \ww{B_{1,3}}$. 
\end{lemma}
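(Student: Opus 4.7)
The plan is to exploit the fact that $M_{1,1}$, $M_{1,2}$, $M_{1,3}$ are pairwise disjoint subsets of $M_1$, so
\[
\ww{M_1} \geq \ww{M_{1,1}} + \ww{M_{1,2}} + \ww{M_{1,3}},
\]
and then bound each term on the right from below using the weight inequalities already established in Claim~\ref{clm:types}. Since the partition of $B_1$ into $B_{1,1}, B_{1,2}, B_{1,3}$ and the accompanying partition of $M_{1,1}, M_{1,2}, M_{1,3}$ was constructed precisely by iterating over disjoint ``sets of type $t$'' of edges, the bound reduces to a term-by-term comparison within each type.

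Concretely, I would proceed type by type. For each set of type $1$ edges $(e,f,e')$, Claim~\ref{clm:types} gives $\ww{e'} \geq \frac{1}{2}(\ww{e}+\ww{f})$; summing over all type $1$ sets, and recalling that $e,f$ are exactly the edges contributed to $B_{1,1}$ and $e'$ is the edge contributed to $M_{1,1}$, yields $\ww{M_{1,1}} \geq \frac{1}{2}\ww{B_{1,1}}$. For each set of type $2$ edges $(e,f,e')$, Claim~\ref{clm:types} gives $\ww{e'} \geq \ww{e}$ (only $e$, and not $f$, is in $B_{1,2}$), so summing yields $\ww{M_{1,2}} \geq \ww{B_{1,2}}$. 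For each set of type $3$ edges $(e,f,z,e',e'')$, the certificate relations $\ww{e'} \geq \ww{e}$ and $\ww{e''} \geq \ww{z}$ give $\ww{e'}+\ww{e''} \geq \ww{e}+\ww{z}$; summing, and noting that $\{e,z\}$ are the contributions to $B_{1,3}$ while $\{e',e''\}$ are the contributions to $M_{1,3}$, gives $\ww{M_{1,3}} \geq \ww{B_{1,3}}$.

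Putting these three bounds together yields exactly
\[
\ww{M_1} \geq \ww{M_{1,1}} + \ww{M_{1,2}} + \ww{M_{1,3}} \geq \tfrac{1}{2}\ww{B_{1,1}} + \ww{B_{1,2}} + \ww{B_{1,3}},
\]
which is the claim. There is no real obstacle here: the entire work was done in setting up the partition and proving Claim~\ref{clm:types}; this lemma is essentially a direct bookkeeping corollary of those two ingredients, chosen in a form convenient for the factor-revealing LP that appears later in the proof of Theorem~\ref{THM:GREEDY}. The only minor subtlety to flag is that in type $2$, the edge $f$ belongs to $F'_{1,2}$ and is deliberately \emph{not} charged against $M_1$ in this lemma (it will be accounted for separately through the greedy post-processing step), which is why the coefficient on $\ww{B_{1,2}}$ is $1$ rather than $\frac{1}{2}$.
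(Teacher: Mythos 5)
Your proof is correct and follows essentially the same route as the paper's: decompose $\ww{M_1} \geq \ww{M_{1,1}} + \ww{M_{1,2}} + \ww{M_{1,3}}$ via disjointness, then bound each piece type-by-type using Claim~\ref{clm:types} (including, for type $3$, the certificate inequalities $\ww{e'}\geq\ww{e}$ and $\ww{e''}\geq\ww{z}$ from the proof of the claim rather than its stated max-form, exactly as the paper does). Your closing remark about why $f$ is not charged here is accurate and consistent with how the lemma feeds into the factor-revealing LP.
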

\begin{proof}
	For any pair of edges $e,f$ in $B_{1,1}$ that belong to the same type $1$ set, there exist a unique edge $e'$ in $M_{1,1}$ with $\ww{e'} \geq \frac{1}{2} \cdot \paren{\ww{e}+\ww{f}}$
	by first part of Claim~\ref{clm:types}. Hence, $\ww{M_{1,1}} \geq \frac{1}{2} \cdot \ww{B_{1,1}}$. 
	
	For any edge $e$ in $B_{1,2}$ that belongs to a type $2$ set, there exists a unique edge $e'$ in $M_{1,2}$ with $\ww{e'} \geq \ww{e}$ by
	second part of Claim~\ref{clm:types}. Hence, $\ww{M_{1,2}} \geq \ww{B_{1,2}}$. 
	
	For any pair of edges $e,z$ in $B_{1,3}$ that belong to the same type $3$ set, there exists a unique pair of edges $e,e'$ in $M_{1,3}$ with $\ww{e} + \ww{e'} \geq \ww{e} + \ww{z}$ by (the proof of) third part of Claim~\ref{clm:types}.
	Hence, $\ww{M_{1,3}} \geq \ww{B_{1,3}}$. 
	
	Finally, since $M_{1,1},M_{1,2},M_{1,3}$ are disjoint and are all subset of $M_1$, we have $\ww{M_1} \geq \ww{M_{1,1}} + \ww{M_{1,2}} + \ww{M_{1,3}} \geq  \frac{1}{2} \cdot \ww{B_{1,1}} + \ww{B_{1,2}} + \ww{B_{1,3}}$, as desired. 
\end{proof}

Recall that $H:=H(V,M_1 \cup \ldots \cup M_k)$ is the graph consists of all edges contained in the $k$ coresets. The following lemma is stronger version of Lemma~\ref{lem:wm-charging} from previous section. 

\begin{lemma}\label{lem:wm-charging-stronger}
	$\opt(H) \geq \ww{F'_{1,0}} + \frac{1}{2} \cdot \ww{B_{1,1}} + \ww{F'_{1,2}} + \max\set{\ww{F'_{1,3}} , \ww{B_{1,3}}}$. 
\end{lemma}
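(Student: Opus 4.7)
My plan is to strengthen the matching constructed in the proof of Lemma~\ref{lem:wm-charging} by accounting for each type in the partition more carefully, rather than collapsing everything into a single $\frac{1}{2}(\ww{F'_1} + \ww{B_1})$ bound. Concretely, I will exhibit a matching $M \subseteq H$ whose weight is at least the claimed right-hand side, by making exactly the same per-set decisions as in Lemma~\ref{lem:wm-charging} for Types $1$, $2$, and the leftover $F'_{1,0}$, but for Type $3$ making a single \emph{global} choice between ``take all the $f$'s'' and ``take all the $(e',e'')$ pairs'', whichever collectively weighs more.

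In detail, the construction of $M$ is: (i) for every Type $1$ set $(e,f,e')$, add $e' \in M_{1,1}$; (ii) for every Type $2$ set $(e,f,e')$, add $f \in F'_{1,2}$; (iii) if $\ww{F'_{1,3}} \geq \ww{M_{1,3}}$, add every $f$ coming from a Type $3$ set, otherwise add every $(e',e'')$ pair coming from a Type $3$ set; (iv) finally, add every edge of $F'_{1,0}$. Each chosen edge lies in $H$ because it is either in $M_1 \subseteq H$ or in $F'_1 \subseteq H$. The weight accounting is then immediate:
\begin{itemize}[leftmargin=15pt]
\item Type $1$ contributes $\ww{M_{1,1}} \geq \frac{1}{2}\ww{B_{1,1}}$ by the first bullet of Claim~\ref{clm:types}, summed over all Type $1$ sets.
\item Type $2$ contributes $\ww{F'_{1,2}}$ directly, since we add each $f$ whole.
\item Type $3$ contributes $\max\set{\ww{F'_{1,3}},\ww{M_{1,3}}} \geq \max\set{\ww{F'_{1,3}},\ww{B_{1,3}}}$, where the last inequality uses that $\ww{e'}\geq \ww{e}$ and $\ww{e''}\geq \ww{z}$ for each Type $3$ set.
\item $F'_{1,0}$ contributes $\ww{F'_{1,0}}$.
\end{itemize}
Adding these four contributions yields exactly the bound stated in the lemma.

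The one step that needs care is verifying that the chosen edges form a valid matching, i.e., that there are no conflicts \emph{across} types. This is exactly the same situation as in Lemma~\ref{lem:wm-charging}: the edges we pick are a disjoint union of a subset of $M_1$ (namely $M_{1,1}$ and possibly $M_{1,3}$) and a subset of $F'_1 \subseteq \Mstar$ (namely $F'_{1,2}$, $F'_{1,0}$, and possibly $F'_{1,3}$), and each of $M_1$ and $\Mstar$ is individually a matching. A potential cross-conflict could only arise at a vertex $v$ shared by an edge $e^{\star} \in \Mstar$ we chose and an edge $e^{\circ} \in M_1$ we chose; however, the partitioning is designed precisely so that whenever $v$ is an endpoint of a certificate edge $e'$ placed in $M_{1,j}$, the $\Mstar$-edges touching $v$ are the very edges $e,f,z$ that are recorded as $B$-side or $F'$-side of the \emph{same} Type-$j$ set, and our per-type decision selects at most one side. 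The global choice in Type $3$ respects this because each Type $3$ set is vertex-disjoint from all other sets by the same reasoning.

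The main (and essentially only) obstacle is this vertex-disjointness bookkeeping across the three types and $F'_{1,0}$, but since the partition scheme and the underlying matching properties of $\Mstar$ and $M_1$ are already given, the argument reduces to the observation above; no new combinatorial input is required beyond what is already used in Lemma~\ref{lem:wm-charging}. The strengthening here is purely in how we \emph{book} the weights per type, taking the full weight of $f$ for Type $2$ (instead of $\frac{1}{2}(\ww{e}+\ww{f})$) and taking a global max over $F'_{1,3}$ versus $B_{1,3}$ for Type $3$ (instead of a per-set average).
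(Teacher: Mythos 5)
Your proof is correct and follows essentially the same route as the paper: reuse the matching construction from Lemma~\ref{lem:wm-charging} and account for its weight type by type via Claim~\ref{clm:types}. The only (immaterial) difference is that you make one global choice between $F'_{1,3}$ and $M_{1,3}$ across all Type~$3$ sets, whereas the paper keeps the per-set choice and lower-bounds the sum of per-set maxima by the maximum of the sums; both give the stated bound.
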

\begin{proof}
	We compute a matching $M$ inside the subgraph $H$ as in Lemma~\ref{lem:wm-charging-stronger}. Edges in $F'_{1,0}$ belong to $M$ by construction. For any pair of edges $(e,f)$ in $B_{1,1}$, there exists a unique edge $e' \in M$ (chosen from $M_1$) with
	$\ww{e'} \geq \frac{1}{2} \cdot \paren{\ww{e} + \ww{f}}$ by Claim~\ref{clm:types}. Any edge in $F'_{1,2}$ also belongs to $M$ by construction. Finally, for any triple of type $3$ edges $(e,f,z)$, the weight of the edges added to $M$ by construction (using third 
	part of Claim~\ref{clm:types}) is at least equal to $\max{\ww{f},\ww{e}+\ww{z}}$, hence clearly these edges have weight at least $\max\set{\ww{F'_{1,3}} + \ww{B_{1,3}}}$ in total (note that in general, one can even get a higher value here by ``mix and 
	matching'' the max-term for each triple of edges separately, but we shall not do that here, as our analysis suggests this further extension is not helpful). 
\end{proof}


We also have the following simple relation between the weight of blocked and available free edges (by second and third part of Claim~\ref{clm:types}):
\begin{align}
	\ww{F'_{1,2}} \geq \ww{B_{1,2}}, \\
	\ww{F'_{1,3}} \geq \frac{1}{2} \cdot \ww{B_{1,3}} \label{eq:f12-b12}.
\end{align}
\noindent
Let $M_G$ denote the final matching computed by the greedy algorithm on the graph $H$. As greedy always outputs a $2$-approximation of $\opt(H)$, we have: 
\begin{align}
	\ww{M_G} &\geq \frac{1}{2} \cdot \opt(H) \label{eq:greedy-opt}.
\end{align}
Finally define the following quantity: 
\begin{align}
	O &:= \Ex\bracket{\ww{F'_{1,0}}} + \Ex\bracket{\ww{F'_{1,2}}}  + \Ex\bracket{\ww{F'_{1,3}}} 
	+ \Ex\bracket{\ww{B_{1,1}}}  + \Ex\bracket{\ww{B_{1,2}}}  + \Ex\bracket{\ww{B_{1,3}}}. \label{eq:O} 
\end{align}
Note that by Lemma~\ref{lem:wm-prob} in previous section and choice of $c$ in $\GreedyCoreset$, 
\begin{align}
	O \geq (1-\eps) \cdot \opt(G). \label{eq:opt-opt} 
\end{align}

We now use the equations in Lemmas~\ref{lem:wm-greedy-m1},~\ref{lem:wm-charging-stronger}, together with Eq~(\ref{eq:f12-b12}), Eq~(\ref{eq:greedy-opt}), and Eq~(\ref{eq:opt-opt}), to form a factor-revealing LP for the performance of the 
algorithm in Theorem~\ref{THM:GREEDY}. Define the following variables (they are \emph{scaled by $O$} and are hence between $0$ and $1$): 
\begin{align*}
	\alpha_{0} &:= \frac{1}{O} \cdot \Ex\bracket{\ww{F'_{1,0}}}, \qquad \alpha_{2} := \frac{1}{O} \cdot\Ex\bracket{\ww{F'_{1,2}}} 
	\qquad \alpha_{3} := \frac{1}{O} \cdot\Ex\bracket{\ww{F'_{1,3}}}, \\
	\beta_{1} &:= \frac{1}{O} \cdot\Ex\bracket{\ww{B_{1,1}}}, \qquad \beta_{2}~:= \frac{1}{O} \cdot\Ex\bracket{\ww{B_{1,2}}} 
	\qquad \beta_{3} := \frac{1}{O} \cdot\Ex\bracket{\ww{B_{1,3}}}, \\
	Z &:= \textnormal{size of output matching scaled by $O$.} 
\end{align*}
\noindent
We design the following linear program: 
\begin{tbox}
	A factor-revealing LP for the performance of the algorithm in Theorem~\ref{THM:GREEDY}. 
	\begin{align}
		&\textbf{Minimize} \qquad Z \notag \\ 
		&\textbf{Subject to.} \quad 1 = \alpha_0 + \alpha_2 + \alpha_3 + \beta_1 + \beta_2 + \beta_3 \label{const:1} \\
		&\qquad \qquad \qquad~~Z \geq \frac{1}{2} \cdot \beta_1 + \beta_2 + \beta_3 \label{const:2} \\
		&\qquad \qquad \qquad~~Z \geq \frac{1}{2} \cdot \Paren{\alpha_0 + \frac{1}{2} \cdot \beta_1 + \alpha_2 +  \alpha_3} \label{const:3} \\
		&\qquad \qquad \qquad~~Z \geq \frac{1}{2} \cdot \Paren{\alpha_0 + \frac{1}{2} \cdot \beta_1 + \alpha_2 +  \beta_3} \label{const:4} \\
		&\qquad \qquad \qquad~~\alpha_2 \geq \beta_2 \label{const:5} \\
		&\qquad \qquad \qquad~~\alpha_3 \geq \frac{1}{2} \cdot \beta_3 \label{const:6} \\
		&\qquad \qquad \qquad ~~\alpha_0,\alpha_2,\alpha_3,\beta_1,\beta_2,\beta_3 \in [0,1] \label{const:7}.
	\end{align}
\end{tbox}
\noindent
We have the following lemma on the correctness of this factor-revealing LP.
\begin{lemma}\label{lem:factor-revealing}
	Let $M$ be the matching output by the algorithm in Theorem~\ref{THM:GREEDY} and $Z^*$ be the optimal value of LP; then, $\Ex\bracket{\ww{M}} \geq Z^* \cdot (1-\eps) \cdot \opt(G)$. 
\end{lemma}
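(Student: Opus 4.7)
The plan is to exhibit the tuple formed by the variables $\alpha_0,\alpha_2,\alpha_3,\beta_1,\beta_2,\beta_3$ (defined right before the LP) together with $Z := \Ex[\ww{M}]/O$ as a feasible solution of the factor-revealing LP. Once feasibility is established, minimality of $Z^*$ gives $\Ex[\ww{M}] = Z\cdot O \geq Z^* \cdot O$, and combining with Eq~(\ref{eq:opt-opt}) yields the conclusion $\Ex[\ww{M}] \geq Z^* \cdot (1-\eps) \cdot \opt(G)$.

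Constraint~(\ref{const:1}) is immediate from the definition of $O$ in Eq~(\ref{eq:O}): the six expectations that define $\alpha_0,\ldots,\beta_3$ sum to $O$, so after dividing by $O$ they sum to $1$; non-negativity of the $\alpha$'s and $\beta$'s also gives the range constraints in~(\ref{const:7}) (each expectation is at most $O$ since all six terms are non-negative). Constraints~(\ref{const:5}) and~(\ref{const:6}) are obtained by taking expectations of the pointwise inequalities $\ww{F'_{1,2}} \geq \ww{B_{1,2}}$ and $\ww{F'_{1,3}} \geq \tfrac{1}{2}\ww{B_{1,3}}$ (see Eq~(\ref{eq:f12-b12}) and the line above it) and dividing through by $O$.

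The remaining three constraints use the fact that the algorithm outputs whichever of $M_1$ or $M_G$ has larger weight, so $\Ex[\ww{M}] \geq \Ex[\ww{M_1}]$ and $\Ex[\ww{M}] \geq \Ex[\ww{M_G}]$. Taking expectations in Lemma~\ref{lem:wm-greedy-m1} and dividing by $O$ produces constraint~(\ref{const:2}) via the first inequality. For the second inequality, we chain Eq~(\ref{eq:greedy-opt}) with Lemma~\ref{lem:wm-charging-stronger} in expectation to obtain
\[
\Ex[\ww{M_G}] \;\geq\; \tfrac{1}{2}\,\Ex\!\left[\ww{F'_{1,0}} + \tfrac{1}{2}\ww{B_{1,1}} + \ww{F'_{1,2}} + \max\{\ww{F'_{1,3}},\ww{B_{1,3}}\}\right].
\]
Since $\Ex[\max\{X,Y\}] \geq \max\{\Ex[X],\Ex[Y]\}$, we may separately lower bound the final term by $\Ex[\ww{F'_{1,3}}]$ or by $\Ex[\ww{B_{1,3}}]$. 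Dividing each of the two resulting inequalities by $O$ yields precisely constraints~(\ref{const:3}) and~(\ref{const:4}).

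There is no real obstacle here; the only subtlety worth emphasizing is why the LP needs two constraints~(\ref{const:3}) and~(\ref{const:4}) rather than a single one with a maximum: the bound coming from Lemma~\ref{lem:wm-charging-stronger} involves an expected maximum, which cannot in general be captured as an algebraic maximum of the scaled variables $\alpha_3$ and $\beta_3$ (those are averages), so we settle for two linear relaxations of that bound. Everything else is a mechanical rescaling of the displayed inequalities already proved earlier in the section.
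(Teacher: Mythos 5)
Your proposal is correct and follows essentially the same route as the paper: exhibit the scaled expectations as a feasible LP solution by checking each constraint against Lemmas~\ref{lem:wm-greedy-m1} and~\ref{lem:wm-charging-stronger}, Eq~(\ref{eq:f12-b12}), and Eq~(\ref{eq:greedy-opt}), then invoke minimality of $Z^*$ and Eq~(\ref{eq:opt-opt}). Your explicit handling of the expectation-of-max step (splitting the bound from Lemma~\ref{lem:wm-charging-stronger} into the two linear constraints~(\ref{const:3}) and~(\ref{const:4})) is exactly what the paper does implicitly when it says ``considering the first/second term in max.''
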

\begin{proof}
	Recall that variables in the LP are in fact scaled by the value of $\opt(G)$ and are hence in $[0,1]$. We now have: 
	\begin{itemize}[leftmargin=20pt]
		\item Constraint~(\ref{const:1}) holds by definition of $O$ in Eq~(\ref{eq:O}). 
		\item Constraint~(\ref{const:2}) follows from Lemma~\ref{lem:wm-greedy-m1} as the algorithm can return the matching $M_1$ as solution (in case its weight is larger than the greedy solution $M_G$ computed). 
		\item Constraint~(\ref{const:3}) follows from Lemma~\ref{lem:wm-charging-stronger} when the algorithm returns the matching $M_G$ (considering the first term in ``max'').  
		\item Constraint~(\ref{const:4}) follows from Lemma~\ref{lem:wm-charging-stronger} when the algorithm returns the matching $M_G$ (considering the second term in ``max'').
		\item Constraint~(\ref{const:5}) holds by the first term in Eq~(\ref{eq:f12-b12}). 
		\item Constraint~(\ref{const:6}) holds by the second term in Eq~(\ref{eq:f12-b12}). 
		\item Constraint~(\ref{const:7}) holds as we are rescaling every variable by $O$. 
	\end{itemize}
	Finally, the solution of the LP is always at least as large as both matchings $M_1$ and $M_G$ and is stated relative to $O$ which by Eq~(\ref{eq:opt-opt}) is at least $(1-\eps) \cdot \opt(G)$, hence finalizing the proof. 
\end{proof}

As this LP contains only a constant number of variables, we can solve it exactly using any standard LP-solver such as Gurobi's command line tool~\cite{gurobi} which results in the following claim. 

\begin{claim}\label{clm:LP-value}
	The optimal solution $Z^*$ of LP is  $Z^* = 1/3$. 
\end{claim}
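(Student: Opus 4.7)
The plan is to prove $Z^{*} = 1/3$ by matching lower and upper bounds, following the standard primal/dual strategy for a small explicit LP.

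First I would establish the lower bound $Z^{*} \geq 1/3$ by exhibiting dual multipliers. Guided by the fact that the only constraints involving $Z$ on the left and all of $\alpha_0, \alpha_2, \alpha_3$ on the right are (\ref{const:3}) and (\ref{const:4}), and that constraint (\ref{const:2}) is the only one carrying all of $\beta_2$ and $\beta_3$, a natural guess is to weight (\ref{const:2}) by $1$ and (\ref{const:3}) by $2$. Indeed, adding (\ref{const:2}) to $2\cdot(\ref{const:3})$ yields
\begin{align*}
3Z \;\geq\; \paren{\tfrac{1}{2}\beta_1 + \beta_2 + \beta_3} + \paren{\alpha_0 + \tfrac{1}{2}\beta_1 + \alpha_2 + \alpha_3}
\;=\; \alpha_0 + \alpha_2 + \alpha_3 + \beta_1 + \beta_2 + \beta_3 \;=\; 1,
\end{align*}
where the last equality is constraint~(\ref{const:1}). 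This gives $Z \geq 1/3$ without even using (\ref{const:4})--(\ref{const:6}).

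Second, I would exhibit a feasible point achieving $Z = 1/3$. The complementary-slackness reading of the dual weights above (namely (\ref{const:2}) and (\ref{const:3}) tight, variables $\alpha_0, \beta_2, \beta_3$ pushed to zero) suggests trying $(\alpha_0, \alpha_2, \alpha_3, \beta_1, \beta_2, \beta_3) = (0,\, 1/3,\, 0,\, 2/3,\, 0,\, 0)$ with $Z = 1/3$. The constraints (\ref{const:1}) and (\ref{const:5})--(\ref{const:7}) are immediate; substitution into (\ref{const:2}) gives $1/3 \geq \tfrac{1}{2}\cdot\tfrac{2}{3} = 1/3$, and substitution into both (\ref{const:3}) and (\ref{const:4}) gives $1/3 \geq \tfrac{1}{2}(0 + \tfrac{1}{3} + \tfrac{1}{3} + 0) = 1/3$, so all three are tight. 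Combined with the lower bound, this gives $Z^{*} = 1/3$.

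The only step requiring any thought is guessing the $1{:}2$ combination of constraints (\ref{const:2}) and (\ref{const:3}); everything else is a mechanical check and, in particular, lets one replace the appeal to a solver in the paper with an elementary two-line proof.
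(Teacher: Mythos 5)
Your proof is correct, and it is genuinely different from (and more self-contained than) what the paper does: the paper simply states that the LP has constantly many variables and reports the value $Z^*=1/3$ as computed by an off-the-shelf solver, with no human-checkable certificate. You supply exactly the missing certificate. The dual direction is right: summing constraint~(\ref{const:2}) with twice constraint~(\ref{const:3}) gives $3Z \geq \alpha_0+\alpha_2+\alpha_3+\beta_1+\beta_2+\beta_3 = 1$ by~(\ref{const:1}), hence $Z\geq 1/3$ for every feasible point. The primal direction also checks out: at $(\alpha_0,\alpha_2,\alpha_3,\beta_1,\beta_2,\beta_3)=(0,\tfrac13,0,\tfrac23,0,0)$ constraints~(\ref{const:2}),~(\ref{const:3}), and~(\ref{const:4}) all evaluate to $1/3$ on the right-hand side while~(\ref{const:1}) and~(\ref{const:5})--(\ref{const:7}) hold trivially, so $Z=1/3$ is attained. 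What your approach buys is a verifiable, solver-free proof of Claim~\ref{clm:LP-value} (and, via the tight primal point, an explicit worst-case weight profile showing the factor-revealing analysis cannot be improved without new constraints); what the paper's approach buys is only brevity. One cosmetic remark: your nonzero $\beta_1$ in the tight primal point is consistent with Remark~\ref{rem:too-good}, which observes that forcing the machine-$1$ matching to be light (shrinking the right-hand side of~(\ref{const:2})) pushes the optimum toward $1/2$ --- your certificate makes that dependence transparent, which the solver output does not.
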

\noindent
Theorem~\ref{THM:GREEDY} now follows immediately from Lemma~\ref{lem:factor-revealing} and Claim~\ref{clm:LP-value}. 

\begin{remark}\label{rem:too-good}
	\emph{
	Recall that in the analysis of Theorem~\ref{THM:GREEDY}, $M_1$ denotes the weight of the greedy solution on a single subgraph, i.e., after subsampling the graph $G$ by a 
factor of $c/k$. It is thus natural to expect that weight of of $M_1$ should be much smaller than $\opt(G)$. After all, any single weighted matching in $G$ would incur a ``loss'' of factor $c/k$ in its weight in $\Gi{1}$ due to sampling, and unless 
$G$ contains ``many'' (at least $\Omega(k/c)$) \emph{edge-disjoint near-optimal weighted matchings}, then indeed $\ww{M_1} \ll \opt(G)$. 
This suggests that in most cases, we can expect weight of $M_1$ to be much smaller than $\opt(G)$ (we demonstrate this on the datasets studied in this paper in our experimental studies in Section~\ref{sec:exp}). }

\emph{
	On the other hand, it is easy to verify from the factor-revealing LP for Theorem~\ref{THM:GREEDY} that if weight of $M_1$ is $o(\opt(G))$, then in fact the solution returned by the algorithm
	is a $\Paren{2+o(1)}$-approximation as opposed to only $3$ (simply change LHS in Constraint~(\ref{const:2}) to $o(1)$ instead). In particular, for ``more realistic'' values of $\ww{M_1}$, i.e., around $0.1$, the \emph{theoretical} guarantee of this 
	algorithm already gets within a factor 90\% of the sequential greedy algorithm (i.e., the approximation ratio becomes $2.22$).
}
\end{remark}

\section{Empirical Study}\label{sec:exp}

We report the results of evaluating our algorithm on
a number of publicly available datasets.  

\ssection{Datasets.} We use different datasets with varying
sizes, ranging from about six million to half a trillion
edges. Table~\ref{tab:dataset-size} provides the statistics. 
\begin{table*}[b!]
\caption{Statistics about datasets used for empirical evaluation.\label{tab:dataset-size}}
\vspace{1mm}
\centering
\begin{tabular}{lrrr}
\hline
{\bf Dataset} & {\bf Number of vertices} & {\bf Number of edges} &
                                                                   {\bf Maximum degree} \\
\hline
{Friendster} & 65,608,366	& 546,396,770,507	& 2,151,462 \\
{Orkut} & 3,072,441 &	21,343,527,822 &	893,056 \\
{LiveJournal} & 4,846,609 &	3,930,691,845 &	444,522 \\
{DBLP} & 1,482,070 &	5,946,953 &	1,961 \\
\hline
\end{tabular}
\end{table*}
We remark that the number of edges is half the sum of vertex degrees, whereas some previous work (e.g., \cite{BBDHKLM17}) report
the latter as the number of edges. Three of our datasets (namely Friendster, Orkut, and LiveJournal, all taken from SNAP) are the {\em public} datasets used in evaluating a
hierarchical clustering algorithm in~\cite{BBDHKLM17} (they also have a fourth {\em private} dataset). Maximum weight matching is important in the context of hierarchical
clustering, because it can be used to generate a more balanced
hierarchy:  Iteratively find a maximum weight matching and contract
the edges of the matching.  Then the size of clusters at each level $k$ will be
nearly the same and equal to $2^k$. We also add
a fourth dataset of our own based on \emph{public} DBLP co-authorship~\cite{DBLP:MIT}: vertices
denote authors and edge weights denote the number of papers two
researchers co-authored.

\ssection{Implementation Details.} 
We implement our algorithm on a real proprietary MapReduce-like cluster, i.e., no simulations, with tens to hundreds of machines,
depending on the size of the dataset. In our set-up, as in any standard MapReduce system, data is
initially stored in distributed storage (each edge on an arbitrary machine), and is then fed into the system and the output matching will be stored in a single file on one machine. To be more precise, our implementation follows
the approach in Theorem~\ref{THM:GREEDY}: we first compute a coreset on each machine using $\GreedyCoreset$ and then after combining the edges, we simply run the greedy algorithm
again to compute the final solution. As such, our algorithm can be seen as the distributed version of the 
{\em sequential} greedy algorithm for the weighted matching problem. Our
experiments verify that our coreset approach provides significant speed-ups with little quality loss compared to  sequential greedy , which complements our theoretical results in Theorems~\ref{thm:main} and~\ref{THM:GREEDY}.

The two main measures we are interested in are the \emph{quality} of the returned solution and the \emph{speed-up} of the algorithms compared to the sequential greedy algorithm: 

\begin{itemize}
	\item \emph{Quality.} Our coreset-based approach typically secures close to 99.5\% of the weight of the solution found by the sequential greedy algorithm (and slightly above this for the cardinality instead of weight). 
Table~\ref{tab:speedup} shows the quality of the solution obtained on each dataset separately. We emphasize that even though our theoretical proof in Theorem~\ref{THM:GREEDY} guarantees
 a 66\% performance (for the
coreset-based algorithm with greedy as post-processing compared to the sequential greedy
algorithm), we never lose more than 1--2\% in terms of the solution quality (recall also Remark~\ref{rem:too-good} for a theoretical explanation of this phenomenon).

	\item \emph{Speed-up.} The speed-up achieved by our coreset-based approach varies significantly between the datasets---ranging from $3$x to $130$x---compared to the sequential greedy algorithm. The speed-up is larger for bigger graphs, where the coreset
	computation can really take advantage of the parallelism. Table~\ref{tab:speedup} provides the speed-up obtained on each dataset separately. 
\end{itemize}

 \begin{table}[h!]
\caption{Speed-up and relative solution quality of our coreset-based
  algorithm compared to the sequential greedy algorithm.\label{tab:speedup}}
\vspace{1mm}
\centering
\begin{tabular}{lccc}
\hline
{\bf Dataset} & {\bf Speed-up} & {\bf Weight} &
                                                                   {\bf Cardinality} \\
\hline
{Friendster} & 130x & 99.83\% & 99.74\% \\
{Orkut} & \phantom{0}17x & 98.90\% & 99.42\% \\
{LiveJournal} & \phantom{00}6x & 99.86\% & 99.79\% \\
{DBLP} & \phantom{00}3x & 99.55\% & 99.27\% \\
\hline
\end{tabular}
\end{table}

Several remarks are in order. Firstly, since the graphs in our datasets are too big, we could not compute the value of optimal weighted matching on these graphs.
Therefore we only compare the quality of our solution to the quality of the sequential greedy algorithm that does not use a coreset. Additionally, in Table~\ref{tab:speedup}, the actual running times and number of machines are withheld due 
to company's policy so as to not reveal specifics of hardware and architecture in our company's clusters. Instead we focus on speed-up (as did several previous work) which is, unlike absolute running
time, mostly independent of computing infrastructure, and thus a better performance measure for the algorithms. Indeed, we expect that any MapReduce-based system to see similar speed-up to what we report. 
However, we provide the following back-of-the-envelope calculation in order to help the reader estimate the general whereabouts of the running times.
In the biggest dataset we consider, namely Friendster, storing the graph in the standard
edge format used by the Stanford Network Analysis Project (SNAP)
requires 20--30TB.  Merely reading this data into memory from a
typical 7200 RPM HDD (with maximum read speed of 160MB/s) takes at
least 35--50 hours. On this dataset, we obtain about 130x speed-up by building a coreset
and running the greedy algorithm on it.  Our solution secures 99.83\%
of the weight of the solution found by the sequential greedy algorithm, and its cardinality
is 99.74\% that of the greedy's.

\ssection{Role of the Multiplicity Parameter.} 
We also verify empirically that increasing the multiplicity parameter
in coreset computation improves the overall quality of the final
matching solution.  This holds for both the cardinality and the weight of
the matching obtained.  We demonstrate this for two values of $k$,
which is the number of subgraphs we partition the original graph into in the coreset approach (which is also the number of machines).
Our results in this part are summarized in Figure~\ref{fig:mult}. 

\begin{figure}[h!]
\centering
\includegraphics[width=9cm]{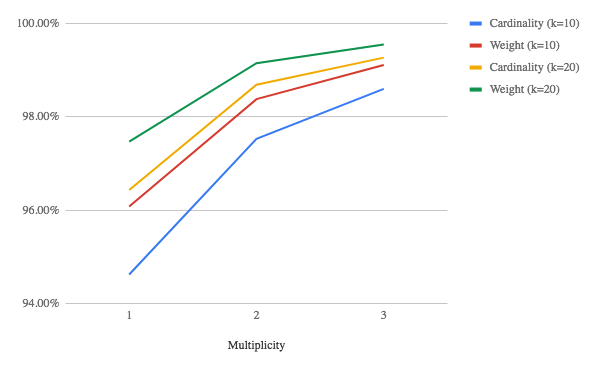}
\caption{The effect of increasing multiplicity on the quality of the solution.\label{fig:mult} The $y$-axis numbers are relative to
  the weight and cardinality of the sequential greedy solution and $k$ denotes the number of subgraphs we partition the original graph into in the coreset approach.}
\end{figure}

\subsection{Comparison with Prior Algorithms}\label{sec:comp}

We provide an empirical comparison of our algorithm with two prior algorithms, a simple and natural sampling based algorithm, and the coreset-based approach of Assadi and Khanna~\cite{AssadiK17} for unweighted matching 
and its extension to weighted matchings using the Crouch-Stubbs technique~\cite{CrouchS14}. At the end of this section, we also present some further theoretical comparison of our algorithm with  prior algorithms.  

\ssection{Empirical comparison with Sampling Algorithm.} 
We compare our coreset approach with a simple {sampling based} algorithm that first samples a fraction of edges of the input graph, 
and then computes the weighted matching only on these edges as opposed to the whole graph itself. This is in fact equivalent to comparing the weight of the matching returned as a coreset on one of the machines, 
as the subgraph on each machine is simply a random subgraph of the input graph and the coreset is just the weighted greedy matching.
As shown in Figure~\ref{fig:sampling}, the quality of the best greedy matching
(among all subgraphs in the random clustering) degrades as we increase 
the number of pieces.  In contrast, the quality of the final solution
improves as we increase the number of pieces as was expected by Remark~\ref{rem:too-good}.
Our experimental results in Figure~\ref{fig:sampling} demonstrate the power of our algorithm compared to this simple approach towards implementing
the sequential greedy algorithm on massive graphs. 

\begin{figure}[h!]
\centering
\includegraphics[width=9cm]{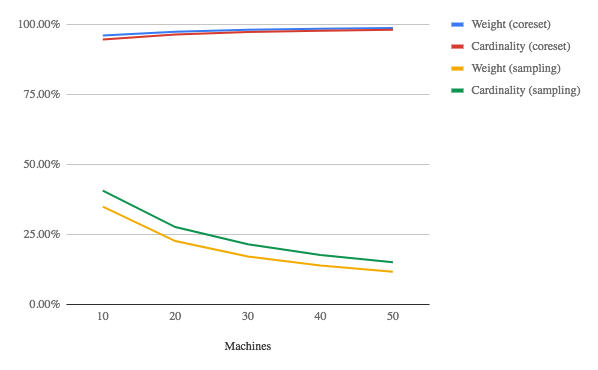}
\caption{Sampling-based algorithm vs.\ the coreset-based
  approach.\label{fig:sampling} The $y$-axis numbers are relative to
  the weight and cardinality of the sequential greedy solution.}
\end{figure}

\ssection{Empirical Comparison with Assadi-Khanna Algorithm~\cite{AssadiK17}.} To provide a more complete picture, we also implemented the previous coreset-based algorithm (for the unweighted matching) by Assadi
 and Khanna~\cite{AssadiK17} as well as its extension to the weighted setting via the Crouch-Stubbs technique~\cite{CrouchS14}.

Overall we learn that our new method provides more speed-up and better solution quality (i.e., the weight of the matching).  The weight of the matching produced via the previous method is a few percentage points worse than what our method produces.  
This is partly because the previous method worked on \emph{unweighted} graphs and we had to extend it to weighted matching algorithm via the Crouch-Stubbs technique that theoretically loses a factor 2 (see~\cite{CrouchS14}). 

The speed-up is an order of magnitude smaller for the two bigger graphs: 4.72 vs.\ 131.58 for Friendster and 1.36 vs.\ 17.13 for Orkut.  
For the two smaller graphs the speed-up of Assadi-Khanna algorithm is about a factor 2 less than our new method.
See Table~\ref{tab:speedup-AK} 
for exact numbers. 

\begin{table}[h!]
\caption{Speed-up and relative solution quality of the Assadi-Khanna algorithm combined with the Crouch-Stubbs technique compared to the sequential greedy algorithm. The results here should be contrasted with our bounds in Table~\ref{tab:speedup}. \label{tab:speedup-AK}}
\vspace{1mm}
\centering
\begin{tabular}{lccc}
\hline
{\bf Dataset} & {\bf Speed-up} & {\bf Weight} &
                                                                   {\bf Cardinality} \\
\hline
{Friendster} & 4.72x & 94.43\% & 98.34\% \\
{Orkut} & \phantom{0}1.36x & 92.41\% & 99.71\% \\
{LiveJournal} & \phantom{0}2.54x & 96.52\% & 99.36\% \\
{DBLP} & \phantom{0}1.19x & 92.42\% & 92.67\% \\
\hline
\end{tabular}
\end{table}

One drawback of the Assadi-Khanna algorithm is that it needs to find the \emph{maximum} matching in the first stage as opposed to a \emph{greedy} matching in our method (see~\cite{AssadiK17} for further discussion on necessity of this in their approach). 
 The maximum matching algorithms in non-bipartite graphs rely on the structure of the so-called blossoms, and (even in the case of unweighted graphs) are much more complicated than the greedy algorithm; they are also much slower---more than
 quadratic time vs.\ linear time.  Therefore, one needs to more carefully balance the number of pieces the graph is chopped into during the coreset stage (so as not to make it computationally prohibitive), but also to avoid having too many edges in the second stage 
 where we want to run a single-machine maximum matching algorithm.

For the LiveJournal graph, we ran more experiments to understand the effect of the parameter $\epsilon$ from Crouch-Stubbs reduction.  Recall that the overhead of their technique to turn an unweighted matching algorithm into a weighted matching algorithm is, 
theoretically, a factor $2+\epsilon$.  In our case, changing $\epsilon$ from $1.0$ down to $0.01$ did not affect the running time significantly.  However, it has a major impact on the solution quality.  The relative weight of the produced matching improved from 87.58\% to 
96.52\%.  We further note that not using the Crouch-Stubbs technique (and merely running the unweighted algorithm of Assadi and Khanna) resulted in a solution of relative weight only 22.16\% (which is to be expected as the original algorithm of~\cite{AssadiK17} is 
for the unweighted matching problem only and is  oblivious to the weight of the edges).

\ssection{Theoretical comparison with prior constant round MapReduce algorithms.} Prior MapReduce algorithms for weighted matching that use constant number of rounds (and $n^{1+\Omega(1)}$ space comparable to ours) 
include~\cite{LattanziMSV11,CrouchS14,AssadiK17,AssadiBBMS17} that also require a small number of rounds, typically between 2 to 10 rounds, and~\cite{AhnG15,HarveyLL18} that require a large unspecified constant number of rounds.  
The approximation ratio of our algorithm improves upon all algorithms in the first category while using the minimal number of 2 rounds but matches that of~\cite{HarveyLL18} and is worse than~\cite{AhnG15} in the second category. 
However, the algorithms in the second category are considerably more complicated than our algorithm (in particular the $(1+\eps)$-approximation algorithm of~\cite{AhnG15}) and also require a very large number of rounds of computation in the MapReduce model
and are hence less amenable to efficient implementation in practice. 

\ssection{Theoretical comparison with prior MapReduce algorithms with smaller memory.} There are also MapReduce algorithms known in the literature for weighted matching that require a smaller memory per machine than our algorithm (typically just near-linear in $n$ memory) at a cost of super-constant (typically $O(\log\log{n})$) number of rounds~\cite{CzumajLMMOS18,AssadiBBMS17,GhaffariGKMR18,GamlathKMS18}. 
However, considering the fact that these algorithms are all quite complicated on one hand and also have an exponential dependence on $\eps$ in both number of rounds and memory of algorithms on the other hand\footnote{For instance, just to obtain an approximation ratio of $2.1$ for weighted matching (i.e., for $\eps = 0.1$), a basic calculation of the bounds in these algorithms implies that the number of required rounds is $> 10^{10}$ rounds (even entirely ignoring the $\log\log{n}$ and other hidden constant terms).} (as opposed to near-linear dependence and only for memory in our algorithm) suggests that barring a highly optimized and careful implementation of these algorithms (which is well beyond the scope of our paper), such algorithms may not be amenable to efficient implementation in practice.

\bibliographystyle{abbrv}
\bibliography{general}

\clearpage 

\appendix

\section{Optimality of Our Coreset Size}\label{app:coreset-lower}
\newcommand{\FG}{\ensuremath{\mathcal{G}}}

In this section, we prove the following theorem that states that size of our coreset in Theorem~\ref{thm:weighted-matching-0.5} is optimal among \emph{any constant factor} randomized coreset with \emph{constant} expected multiplicity. 
The proof is based an adaption of the argument in~\cite{AssadiK17} which only works for coresets with \emph{exact} multiplicity $1$. 
\begin{theorem}\label{thm:coreset-optimal}
	Let $\alg$ be any \emph{constant} factor randomized composable coreset for maximum cardinality matching (and hence for maximum weight matching also) with constant expected multiplicity $O(1)$. Then size of the coreset computed by $\alg$ on each subgraph is $\Omega(n)$. 
\end{theorem}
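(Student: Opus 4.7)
The plan is to adapt the Ruzsa-Szemer\'edi (RS) graph based lower bound of Assadi and Khanna~\cite{AssadiK17} from the case of exact multiplicity $c=1$ to expected multiplicity $c = O(1)$. Recall that an $(r,t)$-RS graph is a graph on $n$ vertices whose edge set partitions into $r$ edge-disjoint induced matchings, each of size $t$; classical explicit constructions yield bipartite $(r,t)$-RS graphs with $r = n^{\Omega(1/\log\log n)}$ and $t = \Theta(n)$. I would construct a hard distribution over graphs $G = H \cup P_{i^*}$, where $H$ is such an RS graph with induced matchings $M_1,\ldots,M_r$, the index $i^* \in [r]$ is uniformly random, and $P_{i^*}$ is a small auxiliary gadget (of size $o(t)$) such that $\opt(G) = \Omega(n)$ and any constant factor matching in $G$ must contain $\Omega(t)$ edges of the specific matching $M_{i^*}$.

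First I would analyze the random $k$-clustering with expected multiplicity $c$. Under the paper's definition, each edge is placed in each machine independently with probability $c/k$, so the subgraphs $\Gi{1},\ldots,\Gi{k}$ are i.i.d., as noted right after Definition~\ref{def:randomized-coreset}. Next, by the uniform choice of $i^*$ and the exchangeability of $M_1,\ldots,M_r$ in the RS decomposition, no deterministic coreset algorithm $\alg$ applied to a single $\Gi{j}$ can distinguish $M_{i^*}$ from the other induced matchings beyond the $o(t)$ bits potentially leaked by the gadget $P_{i^*}$. Consequently, for every $\alg$, the expected number of $M_{i^*}$-edges retained by $\alg(\Gi{j})$ is at most $s/r + o(t/r)$, and by linearity the union $\bigcup_j \alg(\Gi{j})$ contains in expectation $O(ks/r)$ edges of $M_{i^*}$.

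A constant factor coreset must allow recovery of $\Omega(t) = \Omega(n)$ $M_{i^*}$-edges from this union, since by design the gadget $P_{i^*}$ alone contributes only $o(t)$ usable matching edges. Hence $ks/r = \Omega(n)$, i.e., $s = \Omega(nr/k)$. Combining this with the implicit budget $k \cdot s = O(m) = O(rn)$ inherent in the MapReduce/coreset framework (so that $k \leq O(rn/s)$), the two inequalities collapse to $s = \Omega(n)$. The multiplicity $c$ enters only as a constant multiplicative factor: since $c = O(1)$, each edge appears in at most $O(c\log n)$ machines with high probability by a Chernoff bound on the binomial $c_e$, so the ``per-edge information'' that any single edge can contribute across coresets is inflated only by $O(\log n)$, which is absorbed by shrinking $r$ by a polylogarithmic factor in the choice of RS construction.

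The main obstacle is making the symmetry/exchangeability argument in the second paragraph rigorous in the presence of both the gadget and the expected-rather-than-exact multiplicity. Two routes are available: \emph{(i)} a Yao-style minimax reduction to deterministic $\alg$ combined with a direct counting of which RS edges $\alg$ can retain, and \emph{(ii)} an encoding argument using the union of coresets as advice from which $i^*$ can be decoded, contradicting the $\log r$ entropy in $i^*$ whenever $s$ is too small. In either route, the delicate step is engineering $P_{i^*}$ to be sparse enough that no single $\Gi{j}$ contains enough gadget edges to pin down $i^*$ on its own, yet dense enough to force any constant factor approximation on $G$ to use $\Omega(t)$ edges of $M_{i^*}$. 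This is the only place where the $c > 1$ analysis genuinely differs from the $c = 1$ argument of~\cite{AssadiK17}, and the remaining bookkeeping is a routine adaptation.
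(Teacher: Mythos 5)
Your proposal takes a genuinely different route from the paper, and as set up it cannot establish this particular statement. The paper's proof does not use Ruzsa--Szemer\'edi graphs here: it plants a perfect matching $M_{\overline{AB}}$ on all but $n_{AB}=\gamma n/(c\alpha)$ vertices per side and hides it under a sparse random bipartite graph on the small sets $A\times B$, with edge probability calibrated so that after the $c/k$ subsampling each machine sees a random graph of expected degree $1$ whose largest induced matching has size $\Theta(n_{AB})$ and whose edges are exchangeable with the sampled planted edges; hence a size-$s$ coreset captures only $s\cdot\Theta(c^2\alpha/(\gamma k))$ planted edges per machine. The decisive structural feature is that every non-planted edge is incident on one of only $2n_{AB}=O(n/(c\alpha))$ vertices, so if the union of coresets misses most of $M_{\overline{AB}}$ it cannot contain any $\alpha$-approximate matching, for \emph{any} constant $\alpha$. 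Your RS-graph construction lacks exactly this feature: in an $(r,t)$-RS graph with $t=\Theta(n)$ and $\opt(G)\leq n/2+o(n)$, every induced matching $M_j$ with $j\neq i^*$ is itself a matching of size $\Theta(n)=\Theta(\opt(G))$, so a coreset that keeps any single whole induced matching already certifies a constant-factor approximation while containing no edge of $M_{i^*}$. Consequently no gadget $P_{i^*}$ of size $o(t)$ can force ``any constant factor matching to contain $\Omega(t)$ edges of $M_{i^*}$''; RS graphs are the right tool for ruling out approximation \emph{better than a fixed constant} (and for superlinear coreset-size bounds), not for an $\Omega(n)$ bound against \emph{every} constant factor.

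There is also an arithmetic gap at the end. From $ks/r=\Omega(n)$ you get $s=\Omega(nr/k)$, and the budget $ks=O(m)=O(rn)$ gives $s=O(rn/k)$; combining the two yields only $s=\Theta(rn/k)$ (substituting one into the other literally returns $s\geq \Omega(s)$), which is $\Omega(n)$ only when $k=O(r)$ --- and $r=n^{o(1)}$ for RS graphs with near-linear $t$ while $k$ may be polynomial in $n$, so the conclusion does not follow. By contrast, in the paper's argument the per-edge probability of being a planted edge is $\Theta(c^2\alpha/(\gamma k))$, so summing over the $k$ machines cancels $k$ and gives the $k$-independent bound $s=\Omega(n/(c^2\alpha^2))$. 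Your handling of the expected multiplicity (a Chernoff bound on $c_e$) is not where the difficulty lies; the paper absorbs the multiplicity simply through the product structure of the clustering and the $c^2$ factor in the final bound.
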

\begin{proof}
	Let $\alpha = O(1)$ denotes the approximation ratio of $\alg$ and $c = O(1)$ be the expected multiplicity. Also, let $k = \omega(\alpha \cdot c)$ be the number of subgraphs in the random clustering (note that $k$ needs to be larger than $\alpha \cdot c$ as 
	obtaining $(k/c)$-approximation is trivial by returning maximum weight matching on each subgraph as the coreset).  
	
	Consider the following family $\FG$ of bipartite graphs $G(L,R,E)$ defined as follows: 
	\begin{enumerate}[leftmargin=20pt]
		\item Sample two sets $A \subseteq L$ and $B \subseteq R$, each of size $n_{AB} := \gamma \cdot \frac{n}{c \cdot \alpha}$ vertices (for some constant $\gamma \in [0,1]$ to be determined later). 
		\item Sample each edge in $A \times B$ with probability $p_{AB} := \frac{k}{c \cdot n_{AB}}$. Let $E_{AB}$ be these edges. 
		\item Sample a uniformly at random chosen \emph{perfect} matching between $\overline{A}$ and $\overline{B}$. Let $M_{\overline{AB}}$ be this matching. 
		\item Graph $G(L,R,E)$ is defined by setting $E:= E_{AB} \cup M_{\overline{AB}}$. 
	\end{enumerate}
	
	For $G \sim \FG$, we further define the subgraph $G_{AB}(A,B,E_{AB})$ as the \emph{induced} subgraph of $G$ on vertices $A$ and $B$. 
	
	Now consider the random $k$-partitioning with expected multiplicity $c$ for a graph $G \sim \FG$, resulting in $\Gi{1},\ldots,\Gi{k}$.  
	We further define $\Gi{i}_{AB}$ as the induced subgraph of $\Gi{i}$ on vertices $A$ and $B$, i.e., the sampled part of $G_{AB}$ in $\Gi{i}$. Finally, define $M_{AB}^{(i)}$ as the \emph{largest induced matching} in $\Gi{i}_{AB}$. We have the following 
	rather standard lemma similar to~\cite{AssadiK17}. 
	
	\begin{lemma}\label{lem:standard-induced}
		With high probability, for all $i \in [k]$, $\card{M^{(i)}} = \Theta(n_{AB})$. 
	\end{lemma}
	\begin{proof}
		By definition, each $\Gi{i}_{AB}$ is obtained from $G_{AB}$ by sampling each edge from $G$ with probability $c/k$. By considering both the distribution of $\Gi{i}_{AB}$ with respect to $G_{AB}$ and the distribution of $G_{AB}$ itself, 
		we obtain that $\Gi{i}_{AB}$ is a \emph{random} bipartite graph with $n_{AB}$ vertices on each side with probability of each appearing equal to $1/n_{AB}$. The lemma then follows from standard arguments in random
		graph theory; see, e.g. the text by Bollobas~\cite{Bollobas98} (see also~\cite[Lemma A.3]{AssadiK17} for a direct proof of this statement). 
	\end{proof}
	
	We are now ready to prove Theorem~\ref{thm:coreset-optimal}. Define $M^{(i)}$ as the \emph{largest induced matching} in $\Gi{i}$ (not only $\Gi{i}_{AB}$ as in definition of $M^{(i)}_{AB}$). It is immediate to verify that 
	$M^{(i)} = M^{(i)}_{AB} \cup M^{(i)}_{\overline{AB}}$, where $M^{(i)}_{\overline{AB}}$ is the sampled part of $M_{\overline{AB}}$ in $\Gi{i}$, which by Chernoff bound, has size $M^{(i)}_{\overline{AB}} = \Theta(n \cdot c/k)$ with high probability. In the 
	following, we condition on this event and the events in Lemma~\ref{lem:standard-induced} for all the $k$ subgraphs, which still happens with high probability by union bound (and hence does not change the expected performance of the algorithm 
	by more than
	an \emph{additive} $o(1)$ term which we are going to ignore for simplicity). 
	
	The important observation is that $\alg$ is ``oblivious'' to the distinction between
	$M^{(i)}_{AB}$ and $M^{(i)}_{\overline{AB}}$ in $M^{(i)}$. More formally, conditioned on a choice of $M^{(i)}$ and the remainder of graph $\Gi{i}$, the partitioning of $M^{(i)}$ to $M^{(i)}_{AB} \cup M^{(i)}_{\overline{AB}}$ is  
	such that each edge appears in $M^{(i)}_{\overline{AB}}$ with probability: 
	\begin{align*}
		p := \frac{\card{M^{(i)}_{\overline{AB}}}}{{\card{M^{(i)}}}} = \Theta(\frac{n \cdot c^2 \cdot \alpha}{k \cdot \gamma \cdot n}) = \Theta(\frac{c^2 \cdot \alpha}{\gamma \cdot k}). 
	\end{align*}
	
	However, conditioned on a choice of $\Gi{i}$, the output of $\alg$ is fixed as it is only a function of the input graph. Let $C_i$ be the coreset returned by $\alg$ on $\Gi{i}$. We have, 
	\begin{align*}
		\Ex{\card{C_i \cap M^{(i)}_{\overline{AB}}}} &= \sum_{e \in C_i} \Pr\paren{e \in M^{(i)}_{\overline{AB}} \mid e \in M^{(i)}} = \card{C_i} \cdot \Theta(\frac{c^2 \cdot \alpha}{\gamma \cdot k}). 
	\end{align*}
	
	Adding up this quantity across all coresets $C := C_1 \cup \ldots \cup C_k$, we have,
	\begin{align*}
		\Ex{\card{C \cap M_{\overline{AB}}}} \leq \sum_{i=1}^{k} \card{C_i} \cdot \Theta(\frac{c^2 \cdot \alpha}{\gamma \cdot k}) = s \cdot \Theta(\frac{c^2 \cdot \alpha}{\gamma}), 	
	\end{align*}
	where $s$ denotes the size of the coreset. Now suppose towards a contradiction that $s \leq \frac{n}{\gamma^2 \cdot c^2 \cdot \alpha^2}$. By above calculation, we have,
	\begin{align*}
		\Ex{\card{C \cap M_{\overline{AB}}}} \leq \Theta(\gamma \cdot n/\alpha) \leq n/4\alpha,
	\end{align*}
	by taking $\gamma$ to be a sufficiently small constant. 
	
	We are nearly done. Graph $G$ always have a matching of size at least $n-n_{AB} \geq n/2$, by taking  $M_{AB}$. 
	On the other hand, the expected intersection of $\card{C \cap M_{\overline{AB}}} \leq n/4\alpha$ by above calculation. Moreover, the remaining edges in graph $G \setminus M_{\overline{AB}}$ are all incident
	on $2  n_{AB}$ vertices and hence can only contribute to the maximum matching in $G$ by $2\gamma \cdot\frac{n}{c \cdot \alpha} < n/4\alpha$ for $\gamma < 1/8$. This means that the expected size of the 
	maximum matching in union of all coresets, i.e., $C$ has size smaller than $n/2\alpha$, which is contradiction with the fact that the coreset contains an $\alpha$-approximate matching. 
	To conclude, we have that $s \geq \frac{n}{\gamma^2 \cdot c^2 \cdot \alpha^2} = \Omega(\frac{n}{c^2 \cdot \alpha^2})$ which is $\Omega(n)$ for $c,\alpha = \Theta(1)$. 
\end{proof}

\end{document}